\newtheorem{theorem}{Theorem}
\newtheorem{acknowledgement}[theorem]{Acknowledgement}
\newtheorem{corollary}[theorem]{Corollary}
\newtheorem{lemma}[theorem]{Lemma}
\newtheorem{problem}[theorem]{Problem}
\newtheorem{remark}[theorem]{Remark}
\newenvironment{proof}[1][Proof]{\noindent\textbf{#1.} }{\ }
\newenvironment{keywords}[1][Keywords]{\noindent\textbf{#1:} }{}
\def\rbb{\mathbb{R}}
\def\trp{^T}
\def\diag{{\rm diag}}
\def\tr{\mathop{\rm Tr}\nolimits} 
\def\half{\frac{1}{2}}
\begin{document}

\title{Coherent Quantum LQG Control \thanks{This work was supported by the Australian Research Council. 
A preliminary version of this paper appeared in the Proceedings of
the 17th IFAC World Congress, 2008.}}  

\author{Hendra~I.~Nurdin$^{\ast}$, Matthew~R.~James$^{\ast}$, and Ian R.
Petersen$^{\dag}$ \and \{Hendra.Nurdin,Matthew.James\}@anu.edu.au,
i.petersen@adfa.edu.au
\medskip \\$^{\ast}$ Department of Engineering \\ The Australian
National University \\ Canberra, ACT 0200, Australia \medskip
\\$^{\dag}$School of
    Information Technology and Electrical Engineering\\
    University of New South Wales at the Australian Defence Force Academy\\
    Canberra, ACT 2600, Australia}

\date{}
\maketitle

\begin{abstract}
Based on a recently developed notion of physical realizability for
quantum linear stochastic systems, we formulate a quantum LQG
optimal control problem for quantum linear stochastic systems
where the controller itself may also be a quantum system and the
plant output signal can be fully quantum. Such a control scheme is
often referred to in the quantum control literature as ``coherent
feedback control.'' It distinguishes the present work from previous works
on the quantum LQG problem where measurement is performed on the
plant and the measurement signals are used as input to a fully
classical controller with no quantum degrees of freedom. The
difference in our formulation is the presence of additional
non-linear and linear constraints on the coefficients of the
sought after controller, rendering the problem as a type of
constrained controller design problem. Due to the presence of
these constraints our problem is inherently computationally hard
and this also distinguishes it in an important way from the
standard LQG problem. We propose a numerical procedure for solving
this problem based on an alternating projections algorithm and, as
initial demonstration of the feasibility of this approach, we
provide fully quantum controller design examples in which
numerical solutions to the problem were successfully obtained. For
comparison, we also consider the case of classical linear
controllers that use direct or indirect measurements, and show
that there exists a fully quantum linear controller which offers
an improvement in performance over the classical ones.
\end{abstract}

\begin{keywords}
Quantum systems; Quantum control;  Stochastic control; Linear quadratic regulators; Linear control systems
\end{keywords}

\section{Introduction}
\label{sec:intro} Recent successes in quantum and nano-technology
have provided a great impetus for research in the area of quantum
feedback control systems; e.g. see
\cite{VPB83,WM93a,DJ99,AASDM02,GSM04}. It is reasonable to expect
that quantum control is an area of research which could play a
vital role towards realization of conceptual quantum information
systems and quantum computers which are being extensively studied
for potential benefits over their classical counterparts. One
particular area in which significant theoretical and experimental
advances have been achieved is \emph{quantum optics}. In
particular, \emph{linear quantum optics} is one of the possible
platforms being investigated for future communication systems
(see \cite{JPF02,KWD03}) and quantum computers (see \cite{KLM01} and
\cite[Section 7.5]{NC00}), besides being an area of independent
interest in physics. Interestingly, under appropriate assumptions,
the dynamics of some quantum optical devices can be approximately
modeled by linear quantum stochastic differential equations
driven by quantum Wiener processes; see \cite{GZ00}. For details on
quantum stochastic differential equations and Wiener processes,
see \cite{HP84,KRP92,BvHJ07}.

In general, quantum linear stochastic systems represented by
linear Quantum Stochastic Differential Equations (QSDEs) with
arbitrary constant coefficients need not correspond to physically
meaningful systems. This is the same as for classical stochastic
differential equations (throughout this paper we shall use the
term ``classical'' to loosely refer to systems that have no
quantum mechanical components). However, because classical linear
stochastic systems can be implemented at least approximately,
using analog or digital electronics, we regard them as always
being realizable. Physical quantum systems must satisfy some
additional constraints that restrict the allowable values for the
system matrices defining the QSDEs. In particular, the laws of
quantum mechanics dictate that closed quantum systems evolve
\emph{unitarily}, implying that (in the Heisenberg picture)
certain canonical observables satisfy the so-called canonical
commutation relations (CCR) at all times. Therefore, to
characterize physically meaningful systems,  \cite{JNP06} has
introduced a formal notion of physically realizable quantum linear
stochastic systems and derives a pair of necessary and sufficient
characterizations for such systems in terms of constraints on
their system matrices.

In this work, we build on the ideas in \cite{JNP06} and
\cite{SPJ1a} and formulate a quantum LQG optimal control  problem
for quantum linear stochastic systems. The distinguishing feature
of our work compared to previous treatments of the quantum LQG
problem in the literature is that we allow the controller to be
another quantum system whereas previous works only consider the
case where the controller is a classical system driven by the
result of continuous measurements performed on the output of the
quantum plant. In the physics literature, (feedback) control using
a fully quantum system is often referred to as 
``coherent feedback control'' to distinguish it from
control using a classical controller. Coherent controllers are of
interest for, among other things, their potential for providing
faster speed of processing/higher bandwidth (by dispensing of the
use of ``slow'' electronics) and better performance (as will be
demonstrated later in this paper). We stress that the coherent LQG
controller design considered herein leads to a more difficult
problem which cannot be solved using the usual approach of quantum
conditioning and dynamic programming. By viewing the problem as a
polynomial matrix programming problem, we show that by utilizing a
non-linear change of variables due to \cite{SGC97}, the problem
can be systematically converted to a rank constrained LMI problem.
To demonstrate the feasibility of numerically solving this
problem, we provide a design example of stabilization of a quantum
plant for which a solution to the rank constrained LMI problem was
successfully obtained using an alternating projections algorithm
due to \cite{OHM06}.

The organization of the paper is as follows. We begin in Section
\ref{sec:models} with an overview of quantum linear stochastic
systems that are of interest in quantum optics. Section
\ref{sec:phys-real} then recalls some key definitions and results
from \cite{JNP06} on physically realizable systems. In Section
\ref{sec:qlqg-form}, we formulate a novel quantum LQG problem that
allows the controller to be another quantum system. A numerical
procedure for solving the quantum LQG problem is then proposed in
Sections \ref{sec:lmi-rank} and \ref{sec:numerics}. In Section
\ref{sec:extension}, we discuss an extension of the methodology
developed in the preceding two sections. In Section
\ref{sec:examples}, we consider the design of a fully quantum LQG
controller for stabilizing a marginally stable quantum plant and
show that there exists a fully quantum controller which offers an
improved level of performance over a fully classical linear
controller which uses direct or indirect measurements. Finally, in
Section \ref{sec:conclude} we offer some concluding remarks.

\section{General quantum linear stochastic models in quantum optics}
\label{sec:models} We follow the quantum probabilistic setup of
\cite[Section II]{JNP06} and recall the notion of physical
realizability introduced therein. To this end, consider linear
non-commutative stochastic systems of the form
\begin{eqnarray}
dx(t) &=& Ax(t) dt + B dw(t); \quad x(0)=x_0
\nonumber \\
dy(t) &=& C x(t) dt + D dw(t) \label{linear-c}
\end{eqnarray}
where $A$, $B$, $C$ and  $D$   are real matrices in $\rbb^{n
\times n}$, $\rbb^{n \times n_w}$, $\rbb^{n_y \times n}$ and
$\rbb^{n_y \times n_w}$ respectively. Also,  ($n,n_w,n_y$ are positive
integers), and $ x(t) = [\begin{array}{ccc} x_1(t) & \ldots &
x_n(t)
\end{array}]\trp$ is a vector of self-adjoint possibly
non-commutative system variables.

The initial system variables $x(0)=x_0$ are Gaussian with state
$\rho$\footnote{That is, $\tr(\rho
e^{i\lambda^Tx_0})=e^{i\lambda^T m-\frac{1}{2}\lambda^T G
\lambda}$ for all $\lambda \in \rbb^n$ where $m \in \rbb^n$ and
$G$ is a real symmetric matrix satisfying $G+i\Theta \geq 0$ with $\Theta$ as
given in the text above; see, e.g., \cite{Lin98,GZ00,KRP92}).}, and %
satisfy the {\em commutation relations}\footnote{In the case of a
single degree of freedom quantum particle, $x=(x_1, x_2)^T$ where
$x_1=q$ is the position operator, and $x_2=p$ is the momentum
operator. The annihilation operator is $a=(q+ip)/2$. The
commutation relations are $[a,a^\ast]=1$, or $[q,p]=2 i$.}
\begin{equation}
[x_j(0), x_k(0) ] = 2 i \Theta_{jk}, \ \ j,k = 1, \ldots, n,
\label{x-ccr}
\end{equation}
where $\Theta$ is a real antisymmetric matrix with components
$\Theta_{jk}$, and $i=\sqrt{-1}$.  Here, the commutator is defined
by $[A,B]=AB-BA$. To simplify matters and without loss of generality,
we take the matrix $\Theta$ to be of one of the following forms:
\begin{itemize}
\item \emph{Canonical} if $\Theta=\diag(J,J,\ldots,J)$, or

\item \emph{Degenerate canonical} if $\Theta=\diag(0_{n' \times
n'},J,\ldots,J)$, where $0< n' \leq n$.
\end{itemize}
Here, $J$ denotes the real skew-symmetric $2\times 2$ matrix
$$
J= \left[ \begin{array}{cc} 0 & 1 \\ -1 & 0
\end{array} \right],$$
and the \lq\lq{diag}\rq\rq \ notation indicates a block diagonal
matrix assembled from the given entries. To illustrate, the case
of a system with one classical variable and two conjugate quantum
variables is characterized by $\Theta=\mathrm{diag}(0,J)$, which
is degenerate canonical. The vector quantity $w$ describes the
input signals and is assumed to admit the decomposition
\begin{equation}
dw(t) = \beta_w(t)dt + d\tilde w(t) \label{w-general}
\end{equation}
where $\tilde w(t)$ is the noise part of $w(t)$ and $\beta_w(t)$
is a self adjoint, adapted process (see \cite{HP84,KRP92,BvHJ07}
for a discussion of adapted quantum processes). The process
$\beta_w(t)$ serves to represent variables of other systems that
may be passed to the system (\ref{linear-c}) via a connection. It
is also represented on the same quantum probability space, which
can be enlarged if necessary. Consequently, we assume that
components of $\beta_w(t)$ commute with those of $dw(t)$.
Furthermore, we will also assume that components of $\beta_w(t)$
commute with those of $x(t)$; this will simplify matters for the
present work. The noise $\tilde w(t)$ is a vector of self-adjoint
quantum noises with Ito table
\begin{equation}
d\tilde w(t) d \tilde w^T(t) = F_{\tilde w} dt,
\label{w-Ito-general}
\end{equation}
where $F_{\tilde w}$ is a non-negative Hermitian matrix; e.g.,  see
\cite{KRP92,VPB91}. This determines the following commutation
relations for the noise components:
\begin{equation}
[ d\tilde w(t), d\tilde w^T(t) ] = d\tilde w(t) d\tilde w^T(t) -
(d\tilde w(t) d\tilde w^T(t))^T =  2T_{\tilde w}dt , \label{w-ccr}
\end{equation}
where we use the notation $ S_{\tilde w} = \frac{1}{2} ( F_{\tilde
w} + F_{\tilde w}^T),  \ \ T_{\tilde w} = \frac{1}{2} ( F_{\tilde
w} - F_{\tilde w}^T)$ so that $F_{\tilde w} = S_{\tilde w} +
T_{\tilde w}$. For instance, $F_{\tilde w} =
\mathrm{diag}(1,I+iJ)$ describes a noise vector with one classical
component and a pair of conjugate quantum noises. (Here $I$ is the
$2\times 2$ identity matrix.) The noise processes can be
represented as operators on Fock spaces; e.g.,  see
\cite{HP84,KRP92}.

For simplicity, we also adopt the conventions of \cite{JNP06} to
put the system (\ref{linear-c}) into a standard form. Therefore,
we assume that in (\ref{linear-c}): (i) $n_y$ is even, and (ii) $n_w
\geq n_y$. Furthermore, we also assume that $F_{\tilde w}$ is of
the \emph{canonical} form $F_{\tilde w}=I+i\diag(J,\ldots,J)$.
Hence $n_w$ has to be even. Note that if $F_{\tilde w}$ is not
canonical but of the form $F_{\tilde w}=I+i\diag(0_{n' \times
n'},\diag(J,\ldots,J))$ with $n' \geq 1$, we may enlarge $w(t)$
(and hence also $\tilde w(t)$) and $B$ as before such that the
enlarged noise vector, say $\tilde w'$, can be taken to have an
Ito matrix $F_{\tilde w'}$ which is canonical.


\section{Physical realizability of linear QSDEs}
\label{sec:phys-real} A linear quantum stochastic system
(\ref{linear-c}) with arbitrary system matrices need not represent
any meaningful physical system. For example, quantum mechanics
dictates that closed physical quantum systems evolve in a unitary
manner which (in the Heisenberg or interaction picture) implies
the preservation the canonical commutation relations (CCR):
$x(t)x(t)\trp-(x(t)x(t)\trp)\trp=2i\Theta \ \mbox{for all } t\geq 0$.
This essentially restricts the allowable coefficients $A,B,C,D$
for a physically realizable quantum linear stochastic system. For
this reason, \cite{JNP06} develops a precise notion of physical
realizability based around the concept of an \emph{open quantum
harmonic oscillator} \cite[Section 2]{JNP06} as the basic
``dynamical unit'' of a physically realizable quantum system. Such
an oscillator is completely described by a quadratic Hamiltonian
$H=\half x(0)\trp R x(0)$, where $R$ is a real symmetric matrix, and a
linear coupling operator $L=\Lambda x(0)$, where $\Lambda$ is a
complex matrix (here we correct a minor error in 
\cite[Definition 3.1]{JNP06} where the $\half$ factor had been
omitted from the definition of $H$). We now give a brief summary of
the main ideas developed in \cite{JNP06}.

In the case that the system is fully quantum (i.e. $\Theta$ is
canonical) then (\ref{linear-c}) is said to be physically
realizable if it represents the dynamics of some open quantum
harmonic oscillator \cite[Section III-A]{JNP06}. If $\Theta$ is
not canonical then a classical component of $x(t)$, say $x_i(t)$,
is viewed as one component of a pair of (fictitious) canonically
conjugate operators $(x_i(t),z_i(t))$ satisfying
$[x_i(t),z_i(t)]=2i$ and which commute with all other components
of $x(t)$. In that case (\ref{linear-c}) is said to be physically
realizable if it can be embedded in some larger system also of the
form (\ref{linear-c}) with a special structure (referred to as an
\emph{augmentation} of (\ref{linear-c}) in \cite[Section
III-B]{JNP06}) which is itself physically realizable, i.e. the
larger system represents the dynamics of an open quantum harmonic
oscillator. Based on this definition, \cite{JNP06} derives a pair
of necessary and sufficient conditions for a quantum linear
stochastic system of the form (\ref{linear-c}) to be physically
realizable in terms of the system matrices, regardless of whether
$\Theta$ is canonical or degenerate canonical. It is as follows,
with $N_y=n_y/2$, $N_w=n_w/2$ and using the notation $\diag_m(K)$
to denote a block diagonal matrix with a square matrix $K$
appearing $m$ times on its diagonal blocks:

\begin{theorem}[\cite{JNP06}]  \label{thm_phys_real}
The system (\ref{linear-c}) is physically realizable if and only
if:
\begin{eqnarray}
iA\Theta + i\Theta A \trp + BT_wB\trp&=&0
\label{eq_realize_cond_A}, \\
B\left[\begin{array}{c} I_{n_y \times n_y} \\
0_{(n_w-n_y) \times n_y}
\end{array} \right]&=&
\Theta C\trp\diag_{N_y}(J) \label{eq_realize_cond_B},
\end{eqnarray}
and $D=[\begin{array}{cc}I_{n_y \times n_y} & 0_{n_y \times
(n_w-n_y)}\end{array}]$. Moreover for canonical $\Theta$, the
corresponding Hamiltonian and coupling matrices have explicit expressions as
follows. The Hamiltonian matrix $R$ is uniquely given by
$R=\frac{1}{4}(-\Theta A+A\trp \Theta)$, and the coupling matrix
$\Lambda$ is given uniquely by
\begin{eqnarray}
\Lambda=-\frac{1}{2}i\left[\begin{array}{cc} 0_{N_w \times N_w} &
I_{N_w \times N_w}\end{array} \right] (\Gamma^{-1})\trp B\trp
\Theta,
\end{eqnarray}
where $\Gamma=P_{N_w}\diag_{N_w}(M)$ with
$M=\frac{1}{2}\left[\begin{array}{cc} 1 & i\\1 &
-i\end{array}\right]$ and $P_{N_w}$ is a permutation matrix acting
as
$P_{N_w}(a_1,a_2,\ldots,a_{2N_w})^T=(a_1,a_3,\ldots,a_{2N_w-1},a_2,a_4,\ldots,a_{2N_w})^T$.
In the case that $\Theta$ is degenerate canonical, a physically
realizable augmentation of the system can be constructed  to
determine the Hamiltonian and coupling operators using the
explicit formulas above.
\end{theorem}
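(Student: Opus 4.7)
The plan is to return to the definition: the system (\ref{linear-c}) is physically realizable (in the canonical-$\Theta$ case) exactly when it arises as the Heisenberg-picture evolution of an open quantum harmonic oscillator specified by some quadratic Hamiltonian $H=\half x(0)\trp R x(0)$ (with $R=R\trp$ real) and linear coupling $L=\Lambda x(0)$, driven by vacuum field noise. I would first derive, directly from the Hudson--Parthasarathy QSDE and the quantum Ito rules together with the CCR (\ref{x-ccr}), explicit closed-form expressions for $A,B,C,D$ as real linear functions of $R$ and $\Lambda$. The Hamiltonian part contributes the drift $2\Theta R x$; the Lindblad dissipator associated with $L$ contributes a second drift piece depending only on $\Lambda$ (through $\mathrm{Im}(\Lambda^*\Lambda)$), so that $A$ splits as a sum of a piece built from $R$ and a piece built from $\Lambda$. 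The noise term gives $B$ as a fixed real linear function of $\Lambda$, and the Hudson--Parthasarathy output convention $dy=(L+L^*)dt+(\text{trivial scattering of noise})$ gives $C$ as another fixed real linear function of $\Lambda$, together with the asserted block-identity form of $D$.

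Necessity of (\ref{eq_realize_cond_A})--(\ref{eq_realize_cond_B}) then follows by eliminating $R$ and $\Lambda$ from these expressions. Condition (\ref{eq_realize_cond_A}) is in fact just the preservation of the CCR $x(t)x(t)\trp-(x(t)x(t)\trp)\trp=2i\Theta$ in time: applying the quantum Ito rule to $d(xx\trp)$ and antisymmetrising yields $iA\Theta+i\Theta A\trp+BT_wB\trp=0$, and this is both necessary and (together with a compatible choice of $R,\Lambda$) sufficient to maintain the commutation structure. Condition (\ref{eq_realize_cond_B}) arises because $B$ and $C$ are built from the same underlying $\Lambda$: the matrix $\Gamma$ (a permutation composed with block-diagonal Moebius blocks $M$) implements the change of basis between the complex $(a,a^*)$ coordinates in which $\Lambda$ naturally splits as $[\Lambda_-\ \Lambda_+]$ and the real quadrature basis in which $x$ lives; eliminating $\Lambda$ between the expressions for $B$ and $C$ produces exactly (\ref{eq_realize_cond_B}). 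The form of $D$ is forced by the convention of trivial scattering on the coupled output channels and zero feedthrough on the auxiliary ones.

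For sufficiency when $\Theta$ is canonical, I would propose the formulas stated in the theorem for $R$ and $\Lambda$ and verify by direct substitution that they reproduce $(A,B,C,D)$ as coming from an open oscillator. Invertibility of $\Gamma$ (since $M$ and $P_{N_w}$ are invertible) makes the displayed formula for $\Lambda$ well-defined. A short calculation using (\ref{eq_realize_cond_A}) shows that $R=\frac{1}{4}(-\Theta A+A\trp\Theta)$ is symmetric and that $2\Theta R$ agrees with the Hamiltonian part of $A$; condition (\ref{eq_realize_cond_B}) is precisely what guarantees that the single $\Lambda$ given by the displayed formula is simultaneously consistent with both the expression for $B$ and that for $C$. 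Uniqueness of $R$ and $\Lambda$ then follows immediately from the explicit formulas and the invertibility of $\Gamma$.

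Finally, the degenerate canonical case reduces to the canonical one by the augmentation procedure of \cite[Section III-B]{JNP06}: adjoin to each classical component $x_i$ a fictitious conjugate variable $z_i$ with $[x_i,z_i]=2i$ commuting with everything else, extend $A,B,C,D$ in the new directions so that the augmented $\Theta$ is canonical and (\ref{eq_realize_cond_A})--(\ref{eq_realize_cond_B}) continue to hold for the augmentation, apply the canonical result, and read off $R$ and $\Lambda$ for the original system. The main obstacle I anticipate is the bookkeeping in the canonical sufficiency step --- matching the complex-basis expressions coming from $(R,\Lambda)$ back to the real matrices $(A,B,C,D)$ through the $\Gamma$ change-of-basis, and verifying that the proposed $\Lambda$ is simultaneously consistent with $B$ and with $C$. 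This is the step where (\ref{eq_realize_cond_B}) is indispensable and is essentially the heart of the theorem.
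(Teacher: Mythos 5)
This theorem is stated in the paper as an imported result from \cite{JNP06} and is not proved here, so there is no in-paper argument to compare against; your outline faithfully reconstructs the strategy of the cited reference (derive $A,B,C,D$ from $(R,\Lambda)$ via the Hudson--Parthasarathy QSDE and quantum Ito rules, obtain (\ref{eq_realize_cond_A}) as CCR preservation and (\ref{eq_realize_cond_B}) as the compatibility of $B$ and $C$ through the common $\Lambda$ and the $\Gamma$ change of basis, then invert these relations for sufficiency and handle degenerate $\Theta$ by augmentation). The plan is correct in approach, though it remains a sketch: the actual verification that the proposed $R$ and $\Lambda$ reproduce $(A,B,C,D)$ is the computational core and is only described, not carried out.
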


Note that background information concerning the Hamiltonian and
coupling matrices in quantum systems can be found in references
\cite{EB05,GZ00,JNP06}.

\section{Formulation of the quantum LQG problem}
\label{sec:qlqg-form} We consider {\em plants} described by
non-commutative stochastic models of the following form:
\begin{eqnarray}
dx(t) &=& Ax(t)dt + B du(t) + B_w dw(t); \quad x(0)=x; \nonumber \\
dy(t) &=& Cx(t)dt +D_w dw(t); \nonumber \\
z(t) &=& C_z x(t)+ D_z \beta_u(t). \label{sys}
\end{eqnarray}
Here $x(t)$ is a vector of plant variables, $w(t)$ is a quantum
Wiener disturbance vector, $\beta_u(t)$ is an adapted,
self-adjoint process commuting with $x(t)$ (i.e.  $\beta_u(t)
x(t)\trp-(x(t)\beta_u(t)\trp)\trp=0$), and $u(t)$ is a {\em
control} input of the form
\begin{equation}
du(t) = \beta_u(t)dt + d\tilde u(t) \label{u-quant}
\end{equation}
where $\beta_u(t)$ is the ``signal part'' and $\tilde u(t)$ is the
noise part of $u(t)$. The vectors $w(t)$ and $\tilde u(t)$ are
independent quantum noises (meaning that they live on distinct
Fock spaces) with Ito matrices $F_{w}$ and $F_{\tilde u}$ that are
all non-negative Hermitian. We also assume that
$x(0)x(0)\trp-(x(0)x(0)\trp)\trp=\Theta.$

{\em Controllers} are assumed to be non-commutative stochastic
systems of the form
\begin{eqnarray}
d\xi(t) &=& A_K\xi(t)dt +B_{K1}dw_{K1}(t)+ B_{K2}dw_{K2}(t)+ B_{K3}dy(t); \nonumber \\
du(t) &=& C_K\xi(t)dt + dw_{K1}(t) \label{controller}
\end{eqnarray}
where $ \xi(t) = [\begin{array}{ccc} \xi_1(t) & \ldots &
\xi_{n_K}(t)\end{array}]\trp $ is a vector of self-adjoint
controller variables of the same dimension as $x(t)$ (i.e.  the
controller is of the same order as the plant), $B_{K2}$ is a
square matrix of the same dimension as $A_K$, and $B_{K1}$ has the
same number of columns as there are rows of $C_K$. The noises $
w_{Ki}(t)$, $i=1,2$, are vectors of non-commutative Wiener
processes (in vacuum states) with non-zero Ito products and which
are independent of $w(t)$. We assume that
$\xi(0)\xi(0)\trp-(\xi(0)\xi(0)\trp)\trp=\Theta_K$. Here
$\Theta_K$ is the skew-symmetric commutation matrix for the
controller variables $\xi$ that could be of a canonical or
degenerate canonical form (cf. Section \ref{sec:models}).

Assume further that $x(0)\xi(0)\trp-(\xi(0)x(0)\trp)\trp=0$, i.e.
the plant and controller are initially decoupled. The {\em closed
loop system} is obtained by the identification $\beta_u(t)\equiv
C_K\xi(t)$ and $\tilde u(t)\equiv w_{K1}(t)$, and interconnecting
(\ref{sys}) and (\ref{controller}) to give
\begin{eqnarray}
\label{cl1} d\eta(t) &=& \mathcal A \eta(t)  dt + \mathcal B
dw_{cl}(t); \nonumber \\
z(t) &=& \mathcal C \eta(t) \label{eq:closed-loop}
\end{eqnarray}
where $\eta(t) = [\begin{array}{cc}x(t)\trp & \xi(t)\trp
\end{array}]\trp$,
\begin{eqnarray*}
w_{cl}(t) &=& \left[\begin{array}{l} w(t) \\ w_{K1}(t) \\
w_{K2}(t)
\end{array}
\right];~~\mathcal{A} = \left[\begin{array}{ll}A & B C_K
\\B_{K3} C  & A_K
  \end{array}\right]; \nonumber \\
\mathcal B &=& \left[\begin{array}{lll}B_w & B & 0_{2\times 2}
\\B_{K3}D_w & B_{K1} & B_{K2}\end{array}\right]; \quad
\mathcal C = \left[\begin{array}{ll}C_z &
    D_z C_K\end{array}\right].\end{eqnarray*}

With (\ref{eq:closed-loop}) we associate a quadratic performance
index
\begin{equation}
J(t_f) = \int_0^{t_f} \langle   z^{T}(t) z(t) \rangle ~dt .
\label{LQG-index-1}
\end{equation}
Here the notation $\langle \cdot \rangle$ is standard and refers
to quantum expectation (e.g., see \cite{EM98}). In this case, the
quantum expectation is on the composite system (plant, controller and
all quantum noises) following \cite{JNP06}.

We shall proceed to derive an explicit expression for this performance index;
see also \cite{SPJ1a}. To this end, define the symmetrized
covariance matrix $P(t)$ by
\begin{equation}
P(t)= \frac{1}{2}\langle  \eta(t) \eta^{T}(t) +   ( \eta(t)
\eta^{T}(t))^T \rangle.
\end{equation}
Using the quantum Ito rule, we have
\[
\begin{array}{rcl}
d P(t) &=& \frac{1}{2} ( \langle  d\eta(t) \, \eta^{T}(t) \rangle
+ \langle (d\eta(t) \, \eta^{T}(t))^T  \rangle + \langle \eta(t)
\, d\eta^{T}(t)  \rangle + \\
&& \langle (\eta(t) \, d\eta^{T}(t))^T \rangle+ (\mathcal B
F_{w_{cl}} \mathcal B^{T} + (\mathcal B F_{w_{cl}}
\mathcal B^{T})^T )~dt) \\[2mm]
&=& (\mathcal  A   P(t)  +  P(t)\mathcal
 A^{T} +  \frac{1}{2}\mathcal  B  (F_{w_{cl}} +F_{w_{cl}}^T)
\mathcal B^{T})~dt,\\[2mm]
&=& (\mathcal  A   P(t)  +  P(t)\mathcal
 A^{T} + \mathcal  B \mathcal B^{T})~dt,
\end{array}
\]
where the last equality follows from our convention that all
noises are canonical (hence $\frac{1}{2}(F_{w_{cl}}
+F_{w_{cl}}^T)=I$). Hence $P(\cdot)$ satisfies the differential
equation
\begin{equation} \label{DRE}
\dot{P}(t)= \mathcal A  P(t)+  P(t) \mathcal A^{T}+ \mathcal B
\mathcal B^{T}; \quad P(0)=P_0.
\end{equation}
We now have, using the symmetry of $\mathcal C^{T} \mathcal C$ and
$P$, that
\[
\begin{array} {rcl}
\langle  z^{T} z  \rangle &=& \langle  \eta^{T}
\mathcal C^{T} \mathcal C\eta \rangle \\[2mm]
&=&\langle  \mbox{Tr} (\eta^{T}
\mathcal C^{T} \mathcal C \eta) \rangle \\[2mm]
&=& \frac{1}{2}\langle  \mbox{Tr} (
\mathcal C^{T} \mathcal C [\eta \eta^{T} +(\eta \eta^{T})^T]) \rangle \\[2mm]
&=&  \mbox{Tr} ( \mathcal C^{T} \mathcal C P).
\end{array}
\]
Hence, the performance index (\ref{LQG-index-1}) can be expressed as
\begin{equation} \label{Cost}
J(t_f)=\int_0^{t_f} \mbox{Tr} ( \mathcal C^{T} \mathcal C P(t))
~dt
\end{equation}
where $P(t)$ solves (\ref{DRE}). We will focus our attention on
the infinite horizon case where we allow $t_f \uparrow \infty$.
Assuming that $\mathcal A$ is asymptotically stable, standard
results on Lyapunov equations give us $ \lim_{t\to\infty} P(t) =
P$, where $P$ is the unique symmetric positive definite solution
of the Lyapunov equation:
\begin{equation}
\mathcal AP+ P \mathcal A^{T}+ \mathcal B \mathcal B^{T} =0.
\label{LQG-lyapunov}
\end{equation}
Furthermore, by standard methods of analysis we have
\begin{equation*}
\limsup_{t_f \to \infty} \frac{1}{t_f} \int_0^{t_f} \langle z^T(t)
z(t) \rangle dt = \mathrm{Tr} ( \mathcal C^{T} \mathcal C P)=
\mathrm{Tr} (\mathcal C P  \mathcal C^{T}). 
\end{equation*}

As before, let $\diag_{m}(J)$ denote a block diagonal $2m \times
2m$ matrix with $m$ $J$ matrices on the diagonal blocks and let
$n_i$ denote the dimension of $w_{Ki}$ for $i=1,2,3$. We may now
formulate our quantum LQG control problem  for
an infinite horizon as follows:

\begin{problem}[Quantum LQG synthesis]
\label{pb:lqg-main} Given a fixed choice of $\Theta_K$, find
controller matrices $A_K$, $B_{K1}$, $B_{K2}$, $B_{K3}$ and $C_K$
that minimizes the cost functional $J_{\infty}=\tr(\mathcal C P
\mathcal C\trp)$ subject to the constraint that the controller
(\ref{controller}) is physically realizable. That is,
$A_K$, $B_{K1}$, $B_{K2}$, $B_{K3}$, $C_K$ satisfy the conditions of Theorem
\ref{thm_phys_real} with the identification: $A \equiv A_K$,
$B=[\begin{array}{ccc} B_{K1} & B_{K2} & B_{K3}
\end{array}]$, $C \equiv C_K$, $D \equiv [\begin{array}{cc} I_{n_u \times n_u} & 0 \end{array}]$, and
$w \equiv [\begin{array}{ccc} w_{K1} \trp & w_{K2}\trp &
y\trp\end{array}]\trp$, leading to the pair of constraints:
\begin{eqnarray}
&&A_K \Theta_K + \Theta_K A_K\trp + B_{K1}\,\diag_{n_1/2}(J)\,
B_{K1}\trp + B_{K2}\,\diag_{n_2/2}(J)\,B_{K2}\trp + \nonumber \\
&&\quad B_{K3}\,\diag_{n_3/2}(J)\,B_{K3}\trp=0; \label{eq:cntrl-pr-1}\\
&&B_{K1}=\Theta_K C_K\trp \diag_{n_u/2}(J) \label{eq:cntrl-pr-2}.
\end{eqnarray}
\end{problem}

In the above problem $\Theta_K$ is a fixed but freely specified
parameter that determines the type of controller sought. For
example, if $\Theta_K$ is canonical then the controller will be
fully quantum. Our formulation of the quantum LQG problem
\emph{differs} from previous formulations of the quantum LQG
problem, such as given in \cite{EB05} and \cite{DJ99}. The
important difference is that in the earlier works, the controller
is classical whereas in our formulation we seek a controller which
may possibly be another quantum system (depending on how
$\Theta_K$ is defined) which generates an optical field to drive
the quantum plant. What is new in the formulation are the
additional constraints (\ref{eq:cntrl-pr-1}) and
(\ref{eq:cntrl-pr-2}) that must also be satisfied by the
controller to be physically realizable. This is natural since for real
applications, the controller should represent a physical system.
The constraint (\ref{eq:cntrl-pr-1}) is a non-convex, non-linear
equality constraint on the controller matrices $A_K$, $B_{K1}$,
$B_{K2}$, $B_{K3}$ and $C_K$ that presents a formidable challenge
in the controller design.

At present we do not know if there exists an exact or analytical
solution to Problem \ref{pb:lqg-main}. Moreover, in our experience
this non-convex problem is difficult to solve numerically using a
general purpose optimizer such as the `fmincon' routine in the
Matlab Optimization Toolbox; see \cite{Math07}. In our investigation,
it has been more fruitful to consider the relaxed problem of
finding a controller that achieves the cost bound
$J_{\infty}<\gamma$ for some pre-specified bound $\gamma>0$, and
to reformulate this problem into a rank constrained LMI
feasibility problem. As in $H^{\infty}$ synthesis, a solution to
Problem \ref{pb:lqg-main} can, in principle, be found iteratively
by employing a bisection method, or a variant thereof, on the
bound $\gamma$. Therefore, we focus our attention instead on the
following problem:

\begin{problem}\label{pb:lqg-form}
Given a fixed choice of $\Theta_K$ and cost bound parameter
$\gamma>0$, find controller matrices $A_K$, $B_{K1}$, $B_{K2}$,
$B_{K3}$ and $C_K$ such that the following conditions hold.

\begin{enumerate}
\item[F1.] There exists a symmetric matrix $P>0$ satisfying
(\ref{LQG-lyapunov}).

\item[F2.] $J_{\infty}=\tr(\mathcal C P \mathcal C^{T} )< \gamma$.

\item[F3.] The physical realizability constraints
(\ref{eq:cntrl-pr-1}) and (\ref{eq:cntrl-pr-2}) are satisfied.

\end{enumerate}
\end{problem}

\section{Reformulation of the quantum LQG problem into a rank
constrained LMI problem} \label{sec:lmi-rank} We shall now discuss
how to transform Problem \ref{pb:lqg-form} into a rank constrained
LMI problem which is amenable to numerical methods. To best
illustrate the idea, we opt to restrict our attention to the case
where $\Theta_K$ is canonical. Moreover, to  facilitate easy and
explicit exposition of the matrix lifting and linearization
technique, we shall take for a ``canonical'' example, a plant and
controller of order $n$ (recall that in our setup we are looking
for a controller which is of the same order as the plant) with
$n_y=n_u=n$ and $B_{K1}$, $B_{K2}$, $B_{K3}$, $C_K$ all of
dimension $n \times n$. Nonetheless, the matrix lifting principle
described for this canonical case can in principle be adapted to
more general scenarios and for the case in which $\Theta_K$ is
degenerate canonical, that is, the case where the controller has
both quantum and classical degrees of freedom. However, the
lifting is \emph{not} necessarily unique and is too complicated to
describe in a general form. More importantly, for efficiency the
choice of suitable lifting variables should in any case be
considered on a case by case basis to exploit any existing
structure in a particular problem.

Consider a $n$-th order plant (\ref{sys}) with $n_y=n_u=n$ and a
$n$-th order controller (\ref{controller}) with
$n_{w_{K1}}=n_{w_{K2}}=n$ (hence $B_{K1},B_{K2} \in \mathbb R^{n
\times n}$). Then $P$ will be a symmetric matrix of dimension $2n
\times 2n$. The first step is to transform the constraints
(\ref{LQG-lyapunov}) and $J_{\infty}<\gamma$ into an LMI
constraint. To do this we exploit a non-linear change of variables
given in \cite[Eq.(35)]{SGC97}, but to do this we first need to
suitably redefine our plant and controller equations while leaving
the closed-loop equations unaltered. To this end, let us redefine
our plant as:
\begin{eqnarray}
dx(t) &=& Ax(t)dt + B \beta_u(t) + B'_{w'} dw'(t); \quad x(0)=x; \nonumber \\
dy'(t) &=& C'x(t)dt +D'_{w'}dw'(t); \nonumber \\
z(t) &=& C_z x(t)+ D_z \beta_u(t), \label{sys-a}
\end{eqnarray}
with $w'=[\begin{array}{ccc} w\trp & w_{K1}\trp & w_{K2}\trp
\end{array}]\trp$, $B'_{w'}=[\begin{array}{ccc} B_w & B & 0_{n \times
n}
\end{array}]$,
$C'=[\begin{array}{ccc} 0_{n \times n} & 0_{n \times n} & C\trp
\end{array}]\trp$ and $$D'_{w'}=\left[\begin{array}{ccc} 0_{n \times n_w } &
I_{n \times n} & 0_{n \times n} \\ 0_{n \times n_w} & 0_{n \times
n_w} & I_{n \times n}\\ D_{w} & 0_{n \times n} & 0_{n \times
n}\end{array}\right].$$ Here $y'$ is the output equation for the
modified plant that now includes the quantum noise $w_{K2}$ that
enters in the controller, but not in the original plant. In this
way all noises can now be thought of as coming from the modified
plant, as in the setup of standard classical LQG problems. Then, we
also redefine our controller equations as:
\begin{eqnarray}
d\xi(t) &=& A_K\xi(t)dt + B_K dy'(t); \nonumber \\
\beta_u(t) &=& C_K\xi(t), \label{controller-a}
\end{eqnarray}
with $B_{K}=[\begin{array}{ccc} B_{K1} &  B_{K2} & B_{K3}
\end{array}]$. It is easily seen that interconnecting (\ref{sys-a}) and
(\ref{controller-a}) gives the closed-loop equation
(\ref{eq:closed-loop}). Now we are in the setup of \cite{SGC97}
with $D_K=0$ in \cite[Eq.(2)]{SGC97}.

We now follow \cite{SGC97}  by introducing auxiliary
variables $N$, $M$, $\mathbf X$, $\mathbf Y$, $Q \in \mathbb R^{n \times n}$,
with $\mathbf X,\mathbf Y,Q$ symmetric, and applying the following
non-linear change of variables (see \cite[Section IV-B]{SGC97}
with $\mathbf{\hat D}=D_K=0$):
\begin{eqnarray}
\mathbf A &=& NA_KM\trp + NB_K C'\mathbf X + \mathbf Y B C_K
M\trp+ Y A \mathbf X; \label{eq:aux-var-1}\\
\mathbf B&=&NB_K; \label{eq:aux-var-2}\\
\mathbf C&=& C_K M\trp. \label{eq:aux-var-3}
\end{eqnarray}
Then, the constraints (\ref{LQG-lyapunov}) and $J_{\infty}<\gamma$ can
be rewritten as the LMI constraint \cite[Eq.(14)]{SGC97}:
\begin{eqnarray}
&& \left[\begin{array}{cc} A\mathbf X + \mathbf X A\trp+ B\mathbf
C +(B\mathbf C)\trp & \mathbf A\trp + A \\
\mathbf A + A\trp & A\trp \mathbf Y + \mathbf Y A + \mathbf B C' +
(\mathbf B C')\trp \\
(B'_{w'})\trp & (\mathbf YB'_{w'}+ \mathbf B D'_{w'})\trp
\end{array}\right. \nonumber \\
&& \quad \left. \begin{array}{c} B'_{w'}  \\ \mathbf YB'_{w'}+
\mathbf B D'_{w'} \\ -I\end{array}\right]<0; \label{eq:LMI-1} \\
&&\left[\begin{array}{ccc} \mathbf X & I &
(C_z\mathbf X + D_z
\mathbf C)\trp \\
I &  \mathbf Y & C_z\trp \\
C_z \mathbf X + D_z \mathbf C & C_z & Q
\end{array}\right] > 0; \label{eq:LMI-2}\\ %
&&\tr(Q)<\gamma. \label{eq:LMI-3}
\end{eqnarray}
Since the controller is of the same order as the plant, the
matrices $N$ and $M$ can be freely chosen to be any pair of
(invertible) square matrices satisfying $MN\trp=I-\mathbf X
\mathbf Y$.

Once matrices $\mathbf A, \mathbf B, \mathbf C, \mathbf X, \mathbf
Y, Q$ satisfying the LMIs (\ref{eq:LMI-1})-(\ref{eq:LMI-3}) and
matrices $N$ and $M$ satisfying the conditions of the last
paragraph have been found, the original controller matrices
$A_K,B_K,C_K$ can be reconstructed as \cite[Eq.(40)]{SGC97}:
\begin{eqnarray}
C_K&=&\mathbf{C}M^{-T}; \label{eq:inv-LMI-1}\\
B_K&=&N^{-1}\mathbf B; \label{eq:inv-LMI-2}\\
A_K&=&N^{-1}(\mathbf A-NB_K C'\mathbf X -\mathbf Y B C_K M\trp
-\mathbf Y A \mathbf X)M^{-T}. \label{eq:inv-LMI-3}
\end{eqnarray}

Multiplying the left and right hand sides of (\ref{eq:cntrl-pr-1})
with $N$ and $N\trp$, respectively, and introducing new variables
$\breve N=N\Theta_K$ (keep in mind here that $\Theta_K$ is a fixed
matrix), $\breve A_K=NA_K$ and $\breve B_{Ki}=NB_{Ki}$,
$i=1,2,3$, (\ref{eq:cntrl-pr-1}) and (\ref{eq:cntrl-pr-2}) can be
expressed as:
\begin{eqnarray}
&&(-\mathbf AM^{-T}+(\breve B_{K3} C + \mathbf YA)\mathbf XM^{-T}
+\mathbf Y B
C_K)\breve N\trp \nonumber\\
&& \quad + \breve N (\mathbf AM^{-T}- (\breve B_{K3} C + \mathbf Y
A)\mathbf XM^{-T} -\mathbf Y B C_K)\trp+\nonumber\\
&&\quad \sum_{i=1}^{3}\breve
B_{Ki}\,\diag_{n/2}(J)\,\breve B_{Ki}\trp=0;\label{eq:cntrl-pr-12}\\
&&\breve B_{K1}=\breve N C_K\trp \,\diag_{n/2}(J).
\label{eq:cntrl-pr-22}
\end{eqnarray}
Conversely, if $A_K$, $B_{K1}$, $B_{K2}$, $B_{K3}$, $C_K$ solve
Problem \ref{pb:lqg-form} and $P$ is the solution of
(\ref{LQG-lyapunov}), then the LMIs
(\ref{eq:LMI-1})-(\ref{eq:LMI-3}) are all satisfied for some pair
of square matrices $M$ and $N$ satisfying $MN\trp=I-\mathbf X
\mathbf Y$; see \cite[Section IV-B]{SGC97}. Furthermore, since the
solution is physically realizable, (\ref{eq:cntrl-pr-12}) and
(\ref{eq:cntrl-pr-22}) are also satisfied. We summarize the
preceding discussion in the following theorem:

\begin{theorem}
\label{thm:trans-iff} Under the assumptions of this section,
Problem \ref{pb:lqg-form} has a solution for a given $\gamma>0$ if
and only if there exist matrices  $\mathbf A$, $\breve B_{K1}$,
$\breve B_{K2}$, $\breve B_{K3}$, $\mathbf C$, $\mathbf{X}$,
$\mathbf{Y}$, $\breve N$, $M$, $N$, $C_K $ satisfying the LMIs
(\ref{eq:LMI-1})-(\ref{eq:LMI-3}) (with $\mathbf
B=[\begin{array}{ccc} \breve B_{K1} & \breve B_{K2} & \breve
B_{K3} \end{array}]$) and the constraints
(\ref{eq:cntrl-pr-12})-(\ref{eq:cntrl-pr-22}) , $\breve
N=N\Theta_K$, $N M\trp=I-\mathbf Y \mathbf X$ and $\mathbf C=C_K
M\trp$.
\end{theorem}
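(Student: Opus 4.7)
The plan is to prove both directions by tracking the non-linear change of variables from \cite{SGC97} in parallel with the substitutions $\breve N = N\Theta_K$, $\breve A_K = NA_K$, $\breve B_{Ki} = NB_{Ki}$ ($i=1,2,3$). The LMI portion corresponding to conditions F1 and F2 is already handled in \cite{SGC97}, so the novel content is the reformulation of the physical realizability constraints F3 via the extra variables $\breve N$ and $\breve B_{Ki}$, together with the bookkeeping that ties the two layers of substitutions together through the identity $N M\trp = I - \mathbf Y \mathbf X$.

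For the forward direction, I would start from a feasible $(A_K, B_{K1}, B_{K2}, B_{K3}, C_K)$ for Problem~\ref{pb:lqg-form} with associated $P>0$ solving (\ref{LQG-lyapunov}) and giving $\tr(\mathcal C P \mathcal C\trp)<\gamma$. The auxiliary plant (\ref{sys-a}) and controller (\ref{controller-a}) produce the same closed-loop data as (\ref{eq:closed-loop}), so $P$ still works. Invoking \cite[Section IV-B]{SGC97}, I extract symmetric blocks $\mathbf X$ and $\mathbf Y$ from $P$ and $P^{-1}$, choose invertible $N,M$ with $MN\trp = I - \mathbf X \mathbf Y$, and define $\mathbf A, \mathbf B, \mathbf C$ by (\ref{eq:aux-var-1})--(\ref{eq:aux-var-3}) and a suitable $Q$ for the cost Schur complement; this supplies the LMIs (\ref{eq:LMI-1})--(\ref{eq:LMI-3}) and the identity $\mathbf C = C_K M\trp$. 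For F3, setting $\breve N = N\Theta_K$ and $\breve B_{Ki} = NB_{Ki}$ and left-multiplying (\ref{eq:cntrl-pr-2}) by $N$ gives (\ref{eq:cntrl-pr-22}) at once. To obtain (\ref{eq:cntrl-pr-12}), I left-multiply (\ref{eq:cntrl-pr-1}) by $N$ and right-multiply by $N\trp$; using $\Theta_K\trp = -\Theta_K$ converts $N A_K \Theta_K N\trp$ and $N \Theta_K A_K\trp N\trp$ into $-\breve A_K \breve N\trp$ and $\breve N \breve A_K\trp$, and substituting the expression for $\breve A_K = NA_K$ implied by (\ref{eq:inv-LMI-3}) collapses to the stated identity.

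For the reverse direction, given feasible $\mathbf A, \breve B_{Ki}, \mathbf C, \mathbf X, \mathbf Y, \breve N, M, N, C_K$, I define $B_{Ki} = N^{-1}\breve B_{Ki}$ and $A_K$ by (\ref{eq:inv-LMI-3}); the constraint $\mathbf C = C_K M\trp$ already fixes $C_K$ consistently with (\ref{eq:inv-LMI-1}). Applying the converse half of \cite[Section IV-B]{SGC97} to the LMIs (\ref{eq:LMI-1})--(\ref{eq:LMI-3}) produces $P>0$ that satisfies (\ref{LQG-lyapunov}) with $\tr(\mathcal C P \mathcal C\trp)<\gamma$, verifying F1 and F2. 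For F3, I reverse the forward algebra: $\breve N = N\Theta_K$ with $N$ invertible implies (\ref{eq:cntrl-pr-22}) pre-multiplied by $N^{-1}$ is exactly (\ref{eq:cntrl-pr-2}), while pre-multiplying (\ref{eq:cntrl-pr-12}) by $N^{-1}$ and post-multiplying by $N^{-T}$ and re-using $\Theta_K\trp = -\Theta_K$ recovers (\ref{eq:cntrl-pr-1}).

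The main obstacle is the algebraic translation between (\ref{eq:cntrl-pr-1}) and (\ref{eq:cntrl-pr-12}). The inversion formula (\ref{eq:inv-LMI-3}) for $A_K$ carries both $N^{-1}$ and $M^{-T}$ and four additive pieces ($\mathbf A$, $\breve B_{K3} C \mathbf X$, $\mathbf Y B C_K M\trp$, $\mathbf Y A \mathbf X$), and one has to verify that multiplying (\ref{eq:cntrl-pr-1}) by $N$ on the left and $N\trp$ on the right, combined with the skew-symmetry of $\Theta_K$, produces precisely the expression inside (\ref{eq:cntrl-pr-12}) with the $M^{-T}$ factors surviving in the correct places and no orphan $N^{-1}$ factors remaining. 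Once this identity is checked, both directions of the theorem follow by combining it with the standard SGC97 reconstruction lemma.
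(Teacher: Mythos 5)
Your proposal is correct and follows essentially the same route as the paper: the paper's proof is precisely the discussion preceding the theorem, namely invoking the change of variables of \cite{SGC97} (Section IV-B) for the LMI conditions F1--F2, and obtaining (\ref{eq:cntrl-pr-12})--(\ref{eq:cntrl-pr-22}) by pre- and post-multiplying (\ref{eq:cntrl-pr-1})--(\ref{eq:cntrl-pr-2}) by $N$ and $N\trp$ with the substitutions $\breve N=N\Theta_K$, $\breve B_{Ki}=NB_{Ki}$ and $NA_K$ expressed via (\ref{eq:inv-LMI-3}). Your identification of the key algebraic check (that $NA_K=\mathbf A M^{-T}-(\breve B_{K3}C+\mathbf Y A)\mathbf X M^{-T}-\mathbf Y B C_K$, using $B_KC'=B_{K3}C$ and $\Theta_K\trp=-\Theta_K$) is exactly the content the paper leaves implicit.
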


 Note that (\ref{eq:cntrl-pr-12}) and
(\ref{eq:cntrl-pr-22}) are \emph{polynomial matrix equality
constraints} in the parameters $(\mathbf A,\breve B_{K1},\breve
B_{K2},\breve B_{K3},\mathbf C,\mathbf{X},\mathbf{Y},\breve
N,M^{-T})$. By this we mean that they are equality constraints in
a matrix-valued multivariate polynomial with matrix-valued
variables. If we take as decision variables the elements of the
parameters (for symmetric variables such as $\mathbf X$ we need
only take the upper triangular elements) then this becomes a
collection of scalar multivariate polynomial equalities. It is
well known that by introducing additional variables, called
\emph{lifting variables}, and some auxiliary equality constraints,
general polynomial equality and inequality constraints can be
``linearized'' and transformed into linear equality and inequality
constraints in some symmetric positive semidefinite matrix $X$,
plus a rank one constraint: ${\rm rank}(X)=1$ \cite{NN94,BV97}.
However, converting polynomial matrix constraints into a
collection of scalar polynomial constraints may not be desirable
as the resulting scalar polynomials could be of orders much higher
than the order of the original matrix polynomial. Subsequently,
when there are many decision variables the resulting scalar
polynomial program may easily become too large to handle
numerically. This is true in our present case, if we are to
\emph{scalarize} (\ref{eq:cntrl-pr-12}) then we would end up with
34 decision variables and constraints in polynomials of order 4, a
substantially large problem. Therefore we would like to keep the
matrix structure of our problem and try to find suitable matrix
lifting variables instead. The idea is similar to the scalar
version, but we need to take care of the fact that, unlike
scalars, matrices do not in general commute with one another.

We now proceed to linearize (\ref{eq:cntrl-pr-12}) and
(\ref{eq:cntrl-pr-22}) by introducing appropriate matrix lifting
variables and the associated equality constraints, and
transforming them into an LMI with a rank $n$ constraint. However,
in what will follow we set $M=I_{n\times n}$ and $N=I-\mathbf Y
\mathbf X$; this removes one free matrix variable, namely
$M^{-T}$, and \emph{reduces the complexity of the problem}. The 14
matrix lifting variables $W_1,W_2,\ldots,W_{14} \in \mathbb R^{n
\times n}$ are as follows: $W_i=\breve B_{Ki}J$, $i=1,2,3$,
$W_4=\mathbf Y B$, $W_5=\breve B_{K3} C+\mathbf Y A$, $W_6=\breve
N \mathbf C\trp$, $W_7=\breve N \mathbf X$, $W_8=\mathbf A \breve
N\trp$, $W_9=\mathbf Y \mathbf X$, $W_{10}=W_4 W_6\trp$,
$W_{11}=W_5 W_7\trp$, $W_{12}=W_1 \breve B_{K1}\trp$, $W_{13}=W_2
\breve B_{K2}\trp$ and $W_{14}=W_3 \breve B_{K3}\trp$. Now, let
$Z$ be a $23n \times 23n$ symmetric matrix,
$$\mathbf Z_{i,j}=[Z_{kl}]_{k=in+1,(i+1)n,l=jn+1,(j+1)n},$$
$$x=(x_1,\ldots,x_8)=(1,2,\ldots,8),$$ and
$$v=(v_1,\ldots,v_{14})=(9,10,\ldots,22).$$ We require that $Z$ satisfy the constraints:
\begin{eqnarray}
\left. \begin{array}{cc} Z  \geq 0 & \mathbf Z_{v_6,1}-\mathbf Z_{x_8,x_5}=0 \\
\mathbf
Z_{0,0}-I_{n \times n}=0 & \mathbf Z_{v_7,1}-\mathbf Z_{x_8,x_6}=0 \\
\mathbf Z_{1,x_6}-\mathbf
Z_{x_6,1}=0 & \mathbf Z_{v_8,1}-\mathbf Z_{x_1,x_8}=0 \\
\mathbf Z_{1,x_7}-\mathbf
Z_{x_7,1}=0 & \mathbf Z_{v_9,1}-\mathbf Z_{x_7,x_6}=0  \\
\mathbf Z_{v_1,1}-\mathbf Z_{x_2,1}\,\diag_{n/2}(J)=0 & \mathbf Z_{v_{10},1}-\mathbf Z_{v_4,v_6}=0  \\
\mathbf Z_{v_2,1}-\mathbf Z_{x_3,1}\,\diag_{n/2}(J)=0 & \mathbf Z_{v_{11},1}-\mathbf Z_{v_5,v_7}=0  \\
\mathbf Z_{v_3,1}-\mathbf Z_{x_4,1}\,\diag_{n/2}(J)=0 & \mathbf Z_{v_{12},1}-\mathbf Z_{v_1,x_2}=0  \\
\mathbf Z_{v_4,1}-\mathbf Z_{x_7,1}B=0 & \mathbf Z_{v_{13},1}-\mathbf Z_{v_2,x_3}=0 \\
\mathbf Z_{v_5,1}-\mathbf Z_{x_4,1}C-\mathbf Z_{x_7,1}A=0 & \mathbf Z_{v_{14},1}-\mathbf Z_{v_3,x_4}=0\\
\mathbf Z_{x_{8},1}-\Theta_K+\mathbf Z_{v_9,1}\Theta_K=0. &
\end{array} \right\} \label{eq:M-lift-cons}
\end{eqnarray}
Here, terms of the form $\mathbf{Z}_{a,b}$ with $a,b \in
\{x_1,\ldots,x_8\} \cup \{v_1,\ldots,v_{14}\}$ should be
identified with $\mathbf{Z}_{a,1}(\mathbf{Z}_{b,1})^T$. The LMI
constraints (\ref{eq:LMI-1})-(\ref{eq:LMI-3}) can be expressed in
terms of $Z$ by replacing $\mathbf A$, $\mathbf B$, $\mathbf C$,
$\mathbf X$, $\mathbf Y$ respectively with \\$\mathbf
Z_{x_1,1},\left[\begin{array}{ccc} \mathbf Z_{x_2,1} & \mathbf
Z_{x_3,1} & \mathbf Z_{x_4,1} \end{array} \right],\mathbf
Z_{x_5,1}, \mathbf Z_{x_6,1}, \mathbf Z_{x_7,1}$, while the
physical realizability constraints (\ref{eq:cntrl-pr-12}) and
(\ref{eq:cntrl-pr-22}) become the following pair of linear
equality constraints:
\begin{eqnarray}
&&-\mathbf Z_{v_8,1}+\mathbf Z_{v_8,1}\trp+\mathbf
Z_{v_{11},1}-\mathbf Z_{v_{11},1}\trp+\mathbf Z_{v_{10},1}-
\mathbf Z_{v_{10},1}\trp
\nonumber \\&&\qquad
+\mathbf Z_{v_{12},1}+\mathbf Z_{v_{13},1}+\mathbf Z_{v_{14},1}=0; \nonumber \\
&&\mathbf Z_{x_2,1}-\mathbf Z_{v_6,1}\diag_{n/2}(J)=0.
\label{eq:linearized-cons}
\end{eqnarray}
Finally, we also require that $Z$ satisfy a rank $n$ constraint:
\begin{equation}
{\rm rank}(Z)\leq n. \label{eq:rank-M}
\end{equation}

To understand the above rank constrained LMI and its relation to
our original constraints, suppose that there is a $Z$ satisfying
(\ref{eq:M-lift-cons})-(\ref{eq:rank-M}) and the LMI constraints
(expressed in terms of block elements of $Z$). Then, since $Z \geq
0$ and is of rank at most $n$, we may factorize it as $Z=VV\trp$,
where $V \in \mathbb R^{23n \times n}$ and satisfies
$[V_{ij}]_{i,j=1,n}=I_{n \times n}$, and by (\ref{eq:M-lift-cons})
we recover $\mathbf A$, $\breve B_{Ki}$ $(i=1,2,3)$, $\mathbf C,
\mathbf X, \mathbf Y, \breve N$ respectively as $\mathbf
Z_{x_1,1},\ldots,\mathbf Z_{x_8,1}$, and also recover $W_i=\mathbf
Z_{v_i,1}$, $i=1,\ldots,14$. Then $N=\breve N\Theta_K^{-1}$ and
$\mathbf B_{Ki}=N^{-1} \breve B_{Ki}$ $(i=1,2,3)$. The controller
matrices $A_K,B_K,C_K$ are given by
(\ref{eq:inv-LMI-1})-(\ref{eq:inv-LMI-3}) and by construction they
will satisfy (\ref{eq:LMI-1})-(\ref{eq:LMI-3}),
(\ref{eq:cntrl-pr-1}) and (\ref{eq:cntrl-pr-2}). Thus, we obtain a
solution to Problem \ref{pb:lqg-form}.

It should be noted that due to the simplifying assumptions $M=I$
and $N=I-\mathbf Y \mathbf X$, solvability of the rank constrained
LMI problem formulated above is only \emph{sufficient} for
solvability of Problem \ref{pb:lqg-form}. In fact, in Theorem
\ref{thm:trans-iff} it is only required that $M$ and $N$ satisfy
$NM\trp=I-\mathbf Y \mathbf X$. Nonetheless, it is not difficult
to see that by (i) introducing additional variables $M$, $N$, $C_K
\equiv \mathbf C M^{-T}$, $\breve{\mathbf A} \equiv \mathbf A
M^{-T}$, $\breve{\mathbf X} \equiv \mathbf X M^{-T}$, (ii)
additional lifting variables $W_{15}=NM\trp$,
$W_{16}=\breve{\mathbf A} M\trp$, $W_{17}= \breve{\mathbf X}
M\trp$, $W_{18}=C_K M\trp $ and associated constraints
$W_{15}-I+W_9=0$, $\mathbf A-W_{16}=0$, $\mathbf X-W_{17}=0$,
$\breve N - N\Theta_K=0$, $\mathbf C-W_{18}=0$ (iii) redefining
$W_6=\breve N C_K\trp$, $W_7=\breve N \breve{\mathbf X}\trp$ and
$W_8=\breve{\mathbf A} \breve N\trp$, and (iv) enlarging and
redefining $Z$ as well as the set of constraints
(\ref{eq:M-lift-cons}) and (\ref{eq:linearized-cons}) accordingly,
these simplifying assumptions can be removed to make the resulting
rank constrained LMI problem also \emph{necessary} for solvability
of Problem \ref{pb:lqg-form}, but at the expense of having to
solve a larger problem. The simplification we have proposed here,
however, may be especially useful for reducing the complexity of
problems with a plant dimension larger than 2. We conclude this
section with the following  remark.

\begin{remark}
In the formulation of this section, it is not actually essential
to fix $\Theta_K$ to be $\diag_{n/2}(J)$. Instead, it may also be
fixed to be $\Theta_K^S=S\diag_{n/2}(J)S\trp$ for any real
invertible matrix $S$. Indeed, if $A_K^S$, $B_{Ki}^S \,
(i=1,2,3)$, $C_K^S$ solves Problem \ref{pb:lqg-form} for
$\Theta_K=\Theta_K^S$ then $A_K=S^{-1} A_K^S S$,
$B_{Ki}=S^{-1}B_{Ki}^S, \,(i=1,2,3)$, $C_K=C_K^S S$ solves Problem
\ref{pb:lqg-form} for $\Theta_K=\diag_{n/2}(J)$; for details, see
a related discussion in Section \ref{sec:extension} where
$\Theta_K$ is allowed to be a free variable. This added
flexibility will be valuable for numerical attempts at solving
Problem \ref{pb:lqg-form}.
\end{remark}

\section{Numerical solution of the rank constrained LMI problem}
\label{sec:numerics} We have seen in the preceding section that
our problem is essentially a polynomial matrix programming (to be
precise, feasibility) problem (since LMIs can themselves be viewed
as polynomial matrix inequalities) and that the latter can be
converted to a rank constrained problem. It is well-known that
many important practical control problems can be formulated as
polynomial programming problems, including reduced order robust
controller design, static output feedback and gain scheduling (see
\cite{HL06} and the references therein). They are non-convex and
non-linear problems that are, in general, difficult to solve. In
fact, some of these problems are known to be NP-hard
\cite{Shor90,Lass01}.

If one tries to directly attack the (scalar or matrix) polynomial
programming problem, then a specialized method for solving them is
to employ LMI relaxations techniques based on the theory of
moments and the dual theory of sum of squares (SOS) polynomials; see
\cite{Lass01,Koji03,HS04,HL06}. Under appropriate conditions,
relaxation methods can be guaranteed to converge as the order of
relaxation is increased and it can be checked whether a global
optima may have been obtained at a particular relaxation. Despite
its attractive features, the size of the relaxed LMI problem grows
very quickly with the number of decision variables, the degree of
polynomials involved and the order of relaxation, making the
method impractical for problems with many decision variables.

On the other hand, if the problem is converted to a rank
constrained LMI problem then there are \emph{iterative} algorithms
in the literature that try to directly search for a feasible point
satisfying the set of LMIs and the rank constraint, mostly based
on the idea of alternating projections (see \cite{OHM06} and the
references therein). The main drawback of these algorithms is that
they are difficult to analyze and are not in general guaranteed to
converge from arbitrary starting points, even if a solution
exists. However, since there are no relaxations involved that
increase the size of the problem to be solved, they can be more
attractive for solving medium and larger size polynomial
programming problems. This makes them more suitable for our
current problem, which can be considered to be of a substantial
size (recall that if it is converted to a scalar polynomial
programming problem then there would be 34 decision variables and
involve multivariate polynomials of up to degree 4).

To solve the rank constrained  LMI problem formulated in the last
section, we shall use an algorithm by \cite{OHM06} that has been
implemented in the freely available Matlab toolbox LMIRank (see
\cite{LMIRank}) and can be called via the Yalmip optimization
prototyping environment; see \cite{YALMIP}. This algorithm is also
based on alternating projections but, unlike previous alternating
projections algorithms, has a built-in Newton step that has the
potential to accelerate convergence. In Section
\ref{sec:examples}, we will use this method to numerically solve
an example coherent LQG control problem.

As mentioned earlier, solvers for rank constrained LMI problems
are not guaranteed to converge from arbitrary starting points.
Therefore, it is important to have a heuristic method for choosing
starting points for these algorithms. For a given $\gamma>0$, to
obtain a starting point for the LMIRank solver we suggest to first
solve the LMIs (\ref{eq:LMI-1})-(\ref{eq:LMI-3}) to obtain
$\mathbf A,\mathbf B, \mathbf C, \mathbf X, \mathbf Y, Q$. Then
set $M=I_{n \times n}$ and $N=I-\mathbf Y \mathbf X$ and compute
$\breve B_{K1}$, $\breve B_{K2}$, $\breve B_{K3}$, $\breve N$ and
the matrix lifting variables $W_1,\ldots,W_{14}$ according to the
definitions given in Section \ref{sec:lmi-rank}. Let
\begin{eqnarray*}
V_0&=&[\begin{array}{cccccccccccc} I_{n \times n} & \mathbf A\trp
& \breve B_{K1}\trp & \breve B_{K2}\trp & \breve B_{K3}\trp &
\mathbf{C}\trp & \mathbf{X}\trp & \mathbf{Y}\trp & \breve N\trp &
W_1\trp & \ldots & W_{14}\trp
\end{array}]\trp.
\end{eqnarray*}
Then we set $Z=V_0 V_0\trp$ as a heuristic starting point.

\section{An extension of the  numerical procedure}
\label{sec:extension} In this section, we discuss an important
extension of the methodology that has been developed in preceding
sections of the paper.

In Problem \ref{pb:lqg-form}, the commutation matrix $\Theta_K$ is
a fixed but freely specified parameter that determines the type of
controller sought. However, since $A_K$, $B_K$, $C_K$ are all
allowed to be free, it is not essential to restrict $\Theta_K$ to
be canonical or degenerate canonical. A similarity transformation
on $A_K$ essentially corresponds to a (matrix rescaling) of
$\Theta_K$ (i.e. pre and post-multiplication by a non-singular
matrix). Therefore it is possible to also allow $\Theta_K$ to be a
free variable that is only restricted to be \emph{real} and
\emph{skew symmetric}. In this setting, we do not \emph{a priori}
determine the type of controller that is sought (a classical,
quantum, or mixed classical-quantum controller). Instead, we also
\emph{seek} an optimal type of controller. This leads to a
modification of Problem \ref{pb:lqg-form} where the former
restriction on $\Theta_K$ is removed.

\begin{problem}\label{pb:lqg-form-2}
Given a cost bound parameter $\gamma>0$, find real matrices $A_K$,
$B_{K1}$, $B_{K2}$, $B_{K3}$, $C_K$ and a skew symmetric real
matrix $\Theta_K$ such that:

\begin{enumerate}
\item There exists a symmetric matrix $P>0$ satisfying
(\ref{LQG-lyapunov}).

\item $J_{\infty}=\tr(\mathcal C P \mathcal C^{T} )< \gamma$.

\item The resulting controller is physically realizable. That is,
$A_K$, $B_{K1}$, $B_{K2}$, $B_{K3}$, $C_K$ and $\Theta_K$ satisfy
(\ref{eq:cntrl-pr-1}) and (\ref{eq:cntrl-pr-2})
\end{enumerate}
\end{problem}

The following lemma is then immediate and shows that solving
Problem \ref{pb:lqg-form-2} also solves Problem \ref{pb:lqg-form}.

\begin{lemma}
\label{lem:pb2-to-pb1}Suppose that the matrices $\hat A_K$, $\hat
B_{K1}$, $\hat B_{K2}$, $\hat B_{K3}$, $\hat C_K$ and $\hat
\Theta_K$ solve Problem \ref{pb:lqg-form-2}, and $\hat \Theta_K=S
Z S\trp$ for some canonical or degenerate canonical matrix $Z$ and
some real invertible matrix $S$. Then the matrices
\begin{eqnarray}
A_K=S^{-1}\hat A_K S;\quad B_{Ki}=S^{-1} \hat B_{Ki} \quad
i=1,2,3; \quad C_K=\hat C_K S, \label{eq:aux-sol-2a}
\end{eqnarray}
solve Problem \ref{pb:lqg-form} for $\Theta_K=Z$.
\end{lemma}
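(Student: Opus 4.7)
The plan is to recognize the map $(\hat A_K,\hat B_{Ki},\hat C_K,\hat\Theta_K) \mapsto (A_K,B_{Ki},C_K,Z)$ as a similarity transformation of the controller state, $\xi = S\xi'$, induced by the invertible matrix $S$, together with the corresponding congruence on the commutation matrix. The verification then splits into showing that (i) this change of basis leaves the closed-loop LQG cost (conditions F1, F2) unchanged, and (ii) the physical realizability identities (condition F3) transform covariantly into the required form for $\Theta_K = Z$.

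For the cost part, I would observe that substituting $\hat\xi = S\xi'$ into (\ref{controller}) (with hats) and left-multiplying by $S^{-1}$ yields precisely the controller defined by $(A_K,B_{Ki},C_K)$. Consequently the closed-loop matrices relate as
\begin{equation*}
\mathcal A' = T\hat{\mathcal A}T^{-1},\qquad \mathcal B' = T\hat{\mathcal B},\qquad \mathcal C' = \hat{\mathcal C}T^{-1},\qquad T = \mathrm{diag}(I,S^{-1}).
\end{equation*}
Since $\hat P>0$ solves the Lyapunov equation (\ref{LQG-lyapunov}) with hatted matrices, $P' := T\hat P T\trp$ is symmetric positive definite (as $T$ is invertible) and satisfies the analogous Lyapunov equation with primed matrices, giving F1. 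Moreover $\tr(\mathcal C' P' \mathcal C'{}\trp) = \tr(\hat{\mathcal C}T^{-1}T\hat P T\trp T^{-T}\hat{\mathcal C}\trp) = \tr(\hat{\mathcal C}\hat P\hat{\mathcal C}\trp) < \gamma$, which gives F2.

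For F3, I would substitute $\hat\Theta_K = S Z S\trp$ directly into (\ref{eq:cntrl-pr-1}):
\begin{equation*}
\hat A_K S Z S\trp + S Z S\trp \hat A_K\trp + \sum_{i=1}^{3} \hat B_{Ki}\,\mathrm{diag}_{n_i/2}(J)\,\hat B_{Ki}\trp = 0,
\end{equation*}
then pre-multiply by $S^{-1}$ and post-multiply by $S^{-T}$. Using $S^{-1}\hat A_K S = A_K$ and $S^{-1}\hat B_{Ki} = B_{Ki}$, this collapses to exactly (\ref{eq:cntrl-pr-1}) with $\Theta_K = Z$. Similarly, from $\hat B_{K1} = \hat\Theta_K \hat C_K\trp \mathrm{diag}_{n_u/2}(J) = S Z S\trp \hat C_K\trp \mathrm{diag}_{n_u/2}(J)$, left-multiplying by $S^{-1}$ and recognizing $S\trp \hat C_K\trp = (\hat C_K S)\trp = C_K\trp$ yields $B_{K1} = Z C_K\trp \mathrm{diag}_{n_u/2}(J)$, which is (\ref{eq:cntrl-pr-2}) with $\Theta_K = Z$.

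There is no real obstacle here: the lemma is essentially a bookkeeping statement that both the LQG cost and the physical realizability constraints are covariant under simultaneous similarity transformations of the controller state-space and congruence transformations of the controller commutation matrix. The only care needed is in tracking where $S$ versus $S^{-1}$ acts (left vs.\ right, on states vs.\ on matrices) and confirming that the input matrices $B_{Ki}$ transform as $S^{-1}\hat B_{Ki}$ while the output matrix transforms as $C_K = \hat C_K S$, consistent with the state change of basis.
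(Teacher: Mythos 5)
Your proposal is correct and follows essentially the same route as the paper: the paper likewise substitutes $\hat\Theta_K=SZS\trp$ and the definitions (\ref{eq:aux-sol-2a}) into (\ref{eq:cntrl-pr-1})--(\ref{eq:cntrl-pr-2}) and appeals to invariance of the LQG cost under a similarity transformation of the controller state space. You merely spell out the "algebraic manipulations" and the cost-invariance argument (via $T=\mathrm{diag}(I,S^{-1})$ and $P'=T\hat P T\trp$) that the paper leaves implicit.
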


\begin{proof}
Since $\hat \Theta_K$ is real skew symmetric, we can find an
invertible matrix $S$ such that $\hat \Theta_K=S Z S\trp$ for some
matrix $Z$ which is either canonical or degenerate canonical. Now,
we have that
\begin{eqnarray}
&&\hat A_K \hat \Theta_K + \hat \Theta_K \hat A_K\trp + \hat
B_{K1}\,\diag_{n_1/2}(J)\, \hat B_{K1}\trp +
\hat B_{K2}\,\diag_{n_2/2}(J)\, \hat B_{K2}\trp + \nonumber\\
&&\quad \hat B_{K3}\,\diag_{n_3/2}(J)\,\hat B_{K3}\trp=0; \label{eq:aux-sol-2b}\\
&&\hat B_{K1}=\hat \Theta_K \hat C_K\trp \diag_{n_u/2}(J)
\label{eq:aux-sol-2c}.
\end{eqnarray}
After substitution of (\ref{eq:aux-sol-2a}) and $\hat \Theta_K=S Z
S\trp$ into (\ref{eq:aux-sol-2b}) and (\ref{eq:aux-sol-2c}) and
some algebraic manipulations, it is easily obtained that $A_K$,
$B_K$ and $C_K$ satisfy
\begin{eqnarray}
&& A_K Z + Z A_K\trp +  B_{K1}\,\diag_{n_1/2}(J)\, B_{K1}\trp +
 B_{K2} \,\diag_{n_2/2}(J)\, B_{K2}\trp + \nonumber\\
&&\quad B_{K3}\,\diag_{n_3/2}(J)\, B_{K3}\trp=0; \label{eq:aux-sol-2d}\\
&& B_{K1}=Z C_K\trp \diag_{n_u/2}(J) \label{eq:aux-sol-2e}.
\end{eqnarray}
Moreover, the LQG cost is invariant under a similarity
transformation of the controller state space matrices. Hence,
these matrices solve Problem \ref{pb:lqg-form} with $\Theta_K=Z$,
as claimed. \hfill $\Box$
\end{proof}

The following corollary is then obvious.

\begin{corollary}
\label{cor:std-lqg-to-pb1} Let $\Theta_K$ be given (canonical or
degenerate canonical) and suppose that 
$A_K$, $B_{K1}$, $B_{K2}$, $B_{K3}$, $C_K$ solve the standard LQG problem for
a given $\gamma>0$ (i.e. (\ref{LQG-lyapunov}) and
$J_{\infty}<\gamma$ are satisfied), and there exists a real skew
symmetric $Z$ satisfying (\ref{eq:aux-sol-2d}) and
(\ref{eq:aux-sol-2e}). Moreover, suppose there exists a real
invertible matrix $S$ such that $SZS\trp=\Theta_K$. Then the
matrices $\hat A_K$, $\hat B_{K1}$, $\hat B_{K2}$, $\hat B_{K3}$,
$\hat C_K$ given by:

\begin{eqnarray}
\hat A_K=S A_K S^{-1}; \quad\hat B_{Ki}=S B_{Ki} \quad i=1,2,3;
\quad\hat C_K=C_K S^{-1}, \label{eq:aux-sol-2f}
\end{eqnarray}
solve Problem \ref{pb:lqg-form}.
\end{corollary}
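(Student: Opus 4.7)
The plan is to mirror the argument used in Lemma~\ref{lem:pb2-to-pb1}, but run the similarity transformation in the opposite direction. There are essentially two items to verify: (F3) the physical realizability constraints for $\hat A_K$, $\hat B_{K1}$, $\hat B_{K2}$, $\hat B_{K3}$, $\hat C_K$ with respect to $\Theta_K$, and (F1)--(F2) that the closed-loop Lyapunov solution $P$ and cost $J_\infty$ are preserved by the controller similarity $S$.

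For (F3), I would substitute $\hat A_K=SA_KS^{-1}$, $\hat B_{Ki}=SB_{Ki}$, $\hat C_K = C_K S^{-1}$ and $\Theta_K = SZS\trp$ directly into (\ref{eq:cntrl-pr-1}) and (\ref{eq:cntrl-pr-2}). For the first constraint the $S^{-1}$ and $S\trp$ factors in $\hat A_K \Theta_K = SA_K Z S\trp$ collapse cleanly, and likewise $\Theta_K \hat A_K\trp = SZ A_K\trp S\trp$, while the noise terms $\hat B_{Ki}\diag_{n_i/2}(J)\hat B_{Ki}\trp$ simply conjugate by $S$. Factoring out $S$ on the left and $S\trp$ on the right reduces the identity to (\ref{eq:aux-sol-2d}), which holds by hypothesis. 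Similarly, for (\ref{eq:cntrl-pr-2}) one computes $\hat B_{K1} = SB_{K1} = S Z C_K\trp \diag_{n_u/2}(J) = SZS\trp S^{-T} C_K\trp \diag_{n_u/2}(J) = \Theta_K \hat C_K\trp \diag_{n_u/2}(J)$, using (\ref{eq:aux-sol-2e}).

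For (F1)--(F2), I would introduce the block similarity $T=\mathrm{diag}(I_n,S)$ on the closed-loop state $\eta$. A short calculation shows that replacing $(A_K,B_{Ki},C_K)$ by $(\hat A_K,\hat B_{Ki},\hat C_K)$ transforms the closed-loop matrices to $\hat{\mathcal A}=T\mathcal A T^{-1}$, $\hat{\mathcal B}=T\mathcal B$ and $\hat{\mathcal C}=\mathcal C T^{-1}$. Consequently, if $P>0$ solves the original Lyapunov equation (\ref{LQG-lyapunov}), then $\hat P := TPT\trp > 0$ solves $\hat{\mathcal A}\hat P + \hat P\hat{\mathcal A}\trp + \hat{\mathcal B}\hat{\mathcal B}\trp = T(\mathcal A P + P\mathcal A\trp + \mathcal B \mathcal B\trp)T\trp = 0$, giving (F1). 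For (F2), the trace is invariant under this conjugation: $\tr(\hat{\mathcal C}\hat P\hat{\mathcal C}\trp) = \tr(\mathcal C T^{-1} T P T\trp T^{-T}\mathcal C\trp) = \tr(\mathcal C P \mathcal C\trp) < \gamma$.

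There is no serious obstacle here: the corollary is essentially the ``converse bookkeeping'' to Lemma~\ref{lem:pb2-to-pb1}, and the only thing to watch is that $S$ (as opposed to $S^{-1}$) is applied correctly so that the new commutation matrix is $\Theta_K$ rather than $Z$. The only subtlety worth flagging is the standard fact, used implicitly, that any real skew symmetric matrix can be brought by a real congruence into canonical or degenerate canonical form, so the hypothesis that such an $S$ exists is a genuine compatibility assumption (matching the ranks of $Z$ and $\Theta_K$) rather than something to verify from scratch.
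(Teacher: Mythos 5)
Your proof is correct and follows essentially the same route the paper intends: the paper declares the corollary ``obvious'' as the reverse direction of Lemma~\ref{lem:pb2-to-pb1}, whose proof is exactly the substitution-plus-conjugation argument for the realizability constraints together with invariance of the LQG cost under a controller similarity transformation. Your explicit block-similarity $T=\mathrm{diag}(I_n,S)$ for verifying (F1)--(F2) just makes precise what the paper states in one line, and your computations check out.
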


The preceding corollary says that a solution $A_K,B_K,C_K$ to the
standard LQG problem will also solve the quantum LQG problem
(Problem \ref{pb:lqg-form}) \emph{if and only if} a matrix $Z$ can
be found satisfying (\ref{eq:aux-sol-2d}) and
(\ref{eq:aux-sol-2e}) and there exists a real invertible matrix
$S$ such that $SZS\trp=\Theta_K$.

It is also possible to treat Problem \ref{pb:lqg-form-2} using the
rank constrained LMI procedure of Section \ref{sec:lmi-rank}. To
do this, again under the simplifying assumptions $M=I$ and
$N=I-\mathbf Y \mathbf X$ (but which, as remarked earlier, can be
easily removed if desired), introduce the additional variable
$\Theta_K$ and substitute $A_K$, $B_{Ki}$ ($i=1,2,3$), $C_K$ with,
respectively, $\hat A_K$, $\hat B_{Ki}$ ($i=1,2,3$), $\hat C_K$
(cf. Lemma \ref{lem:pb2-to-pb1}). Then we redefine $\breve
N=N\Theta_K$, $x=(x_1,\ldots,x_{10})$ and $Z$ to be a real
symmetric matrix of dimension $25n \times 25n$, and replace
(\ref{eq:M-lift-cons}) with the following set of constraints:
\begin{eqnarray}
\left. \begin{array}{cc}
 Z  \geq 0 & \mathbf Z_{v_3,1}-\mathbf Z_{x_4,1}\,\diag_{n/2}(J)=0  \\
 \mathbf Z_{0,0}-I_{n \times n}=0 & \mathbf Z_{v_4,1}-\mathbf Z_{x_7,1}B=0 \\
 \mathbf Z_{1,x_6}-\mathbf Z_{x_6,1}=0 & \mathbf Z_{v_5,1}-\mathbf Z_{x_4,1}C-\mathbf Z_{x_7,1}A=0 \\
 \mathbf Z_{1,x_7}-\mathbf Z_{x_7,1}=0 & \mathbf Z_{v_6,1}-\mathbf Z_{x_8,x_5}=0  \\
 \mathbf Z_{v_1,1}-\mathbf Z_{x_2,1}\,\diag_{n/2}(J)=0 & \mathbf Z_{v_7,1}-\mathbf Z_{x_8,x_6}=0  \\
 \mathbf Z_{v_2,1}-\mathbf Z_{x_3,1}\,\diag_{n/2}(J)=0 & \mathbf Z_{v_8,1}-\mathbf Z_{x_1,x_8}=0 \\
 \mathbf Z_{x_{9},1}-I_{n \times n}+\mathbf Z_{v_9,1}=0 & \mathbf Z_{x_{10},1}+ \mathbf Z_{1,x_{10}}=0 \\
 \mathbf Z_{v_9,1}-\mathbf Z_{x_7,x_6}=0 &  \mathbf Z_{v_{13},1}-\mathbf Z_{v_2,x_3}=0\\
 \mathbf Z_{v_{10},1}-\mathbf Z_{v_4,v_6}=0 & \mathbf Z_{v_{14},1}-\mathbf Z_{v_3,x_4}=0  \\
 \mathbf Z_{v_{11},1}-\mathbf Z_{v_5,v_7}=0 & \mathbf Z_{x_{8},1}+ \mathbf Z_{x_9,x_{10}}=0 \\
\mathbf Z_{v_{12},1}-\mathbf Z_{v_1,x_2}=0. &
\end{array} \right\} \label{eq:M-lift-cons-2}
\end{eqnarray}
Here, $\mathbf Z_{x_{10},1}+ \mathbf Z_{1,x_{10}}=0$ accounts for
the requirement that $\Theta_K=-\Theta_K\trp$ while $\mathbf
Z_{x_{8},1}+ \mathbf Z_{x_9,x_{10}}$ accounts for the the
constraint $\breve N=N\Theta_K$. The remaining constraints of
Section 5, (\ref{eq:linearized-cons}) and ${\rm rank}(Z)\leq n$,
remain the same. Note that the variable $N$ is now also an {\em
independent} variable, whereas in Section \ref{sec:lmi-rank} $N$
was {\em linearly related} to $\breve N$ by the relation $N=\breve
N \Theta_K^{-1}$ for some fixed constant $\Theta_K$. However, the
introduction of additional free variables and constraints increase
the complexity of the problem and may have a bearing on the
convergence of the alternating projections algorithm. Therefore,
the solver for Problem \ref{pb:lqg-form-2} may be used to
\emph{complement}  the solver for Problem \ref{pb:lqg-form}, and
substitute for one another in case convergence fails for one of
them. Furthermore, it is also necessary to develop an additional
heuristic to determine a good initial guess $\Theta_K^{0}$ for
$\Theta_K$. These can be topics to be considered for further
research. Possible, albeit arbitrary, choices for $\Theta_K^{0}$
are $\Theta_K^{0}=0_{2 \times 2}$ or $\Theta_K^{0}=J$. Once
$\Theta_K^{0}$ has been chosen, we set $N=I-\mathbf Y \mathbf X$,
$\breve N^0=N\Theta_K^0$ and
\begin{eqnarray*}
V_0=[\begin{array}{cccccccccccccc} I & \mathbf A\trp & \breve
B_{K1}\trp & \breve B_{K2}\trp & \breve B_{K3}\trp &
\mathbf{C}\trp & \mathbf{X}\trp & \mathbf{Y}\trp & (\breve
N^0)\trp & N\trp & (\Theta_K^0)\trp & W_1\trp & \ldots &
W_{14}\trp
\end{array}]\trp,
\end{eqnarray*}
then the alternating projections algorithm can be executed by
setting $Z=V_0 V_0\trp$ as the starting point.

\begin{remark}
It should be noted that the extended procedure should not be
considered to supersede the procedure for strictly finding a
quantum controller. This is because, as discussed in the
Introduction, there can be circumstances where a quantum
controller is desirable while the extended procedure may not
return such a controller since $\Theta_K$ is allowed to be free.
\end{remark}

\section{Quantum LQG control design examples}
\label{sec:examples} In this section, we apply the transformation
and matrix lifting technique of Section \ref{sec:lmi-rank} to
compute a fully quantum LQG controller to asymptotically stabilize
a marginally stable fully quantum plant. We work in Matlab using
the Yalmip prototyping environment and a solution was computed
using LMIRank. The semidefinite program solver used for LMIRank is
SeDuMi Version 1.1 Release 3; see \cite{SeDuMi11}. Then for
comparison, we also compute a classical controller that  modulates
a light beam to drive the plant. All computations were performed
on Matlab running on an Apple Mac Pro workstation configured with
two 3GHz Quad-Core Intel Xeon processors, 8GB of memory and
232.89GB hard disk capacity.

The quantum plant to be controlled is a physically realizable (cf.
Section \ref{sec:phys-real}) fully quantum system with Hamiltonian
matrix $R$ and coupling matrix $\Lambda$ (see \cite{JNP06} for the corresponding
definitions) given by
$$R=\frac{1}{2}\left[\begin{array}{cc} \Delta & 0 \\
0 & \Delta \end{array}\right] \quad
 \Lambda=\left[\begin{array}{cc} \sqrt{\kappa_1} & 0 \\
\sqrt{\kappa_2} & 0 \\ \sqrt{\kappa_3} & 0 \end{array}\right].$$
Then,  its dynamics  are given by
\begin{eqnarray}
dx &=& \left[\begin{array}{cc} 0 & \Delta \\-\Delta & 0
\end{array}\right] x dt + \left[ \begin{array}{cc} 0 & 0 \\ 0 & -2\sqrt{k_1}
\end{array}\right]du +  \left[\begin{array}{cccc} 0 & 0 & 0 & 0
\\ 0 & -2\sqrt{k_2} & 0 & -2\sqrt{k_3} \end{array}\right]
\left[\begin{array}{c} dw_1 \\ dw_2 \end{array}\right],
\nonumber\\
dy&=&\left[\begin{array}{cc} 2\sqrt{k_2} & 0
\\ 0 & 0 \end{array}\right]xdt + dw_1 \label{eq:eg-plant-1}
\end{eqnarray}
with $\Delta=0.1$ and $k_1=k_2=k_3=10^{-2}$. This plant may be
thought of as representing the scenario of an atom trapped between
two mirrors of a three mirror cavity in the strong coupling limit
in which the cavity dynamics are adiabatically eliminated; see
\cite{GvH07,DJ99}. Note that by definition of the coupling matrix
$\Lambda$, the quantum noise fields couple only to the position
operator of the atom, which is also the typical setup sought in
various schemes for quantum non-demolition continuous measurement
of position. This particular choice of coupling results in a
marginally stable plant with $A$ having two mutually conjugate
eigenvalues on the imaginary axis.

\subsection{Quantum LQG controller design example I}
Let us try to asymptotically stabilize this system with another
quantum system as the LQG controller. To this end we set $z= x +
\xi$; i.e.,  $C_z=I_{2\times 2}=D_z$. Choosing $\gamma=5.75$ and
numerically solving Problem \ref{pb:lqg-form} following Sections
\ref{sec:lmi-rank} and \ref{sec:numerics}, yields the following
physically realizable controller after 1000 iterations of LMIRank
(with a running time of 2944.9 seconds):
\begin{eqnarray}
d\xi &=& \left[\begin{array}{cc} -2.3907  & 0.8420 \\
-5.5518 &
1.9380\end{array}\right] \xi dt + \left[ \begin{array}{cc} -0.3029 & 0.5042 \\
-0.6603 & 1.0819\end{array}\right]dw_{K1} + \nonumber \\
&&\quad
10^{-10}\left[ \begin{array}{cc} 0.0241 & -0.0471 \\
0.0576 & -0.1136 \end{array}\right]dw_{K2}+
\left[\begin{array}{cc} 3.3626 & 2.1470  \\ 7.6699 & 5.0302
\end{array}\right]dy  \nonumber \\
du(t)&=&\left[\begin{array}{cc} -1.0819 & 0.5042  \\
-0.6603 & 0.3029
\end{array}\right]xdt+dw_{K1} \label{eq:eg-cntrl-1}
\end{eqnarray}
that asymptotically stabilizes the closed-loop system. The actual
closed loop LQG cost achieved by this controller can be computed
to be $J_{\infty}=5.7382$.

Notice that elements of $B_{K2}$ are very small (of the order
$10^{-10}$). This is  a result that we would ideally like to have since $B_{K2}$
is the coefficient for the quantum noise $w_{K2}$ that {only
enters} in the controller and does {\em not} come from the plant.
This noise contributes towards the LQG cost, but since the aim is
to bound this cost, it is not  surprising that the
algorithm finds a controller for which $B_{K2}$ is effectively zero, in order
to remove the effect of the variance of $w_{K2}$ on the LQG cost.
Up to the numerical precision of Matlab, the above numerical
results give:
\begin{eqnarray}
&&A_K \Theta_K +\Theta_K A_K^T + \sum_{k=1}^{3} B_{Ki}JB_{Ki}^T=
10^{-13}\left[\begin{array}{cc} 0 & 0.2896 \\ -0.2896 &
0\end{array} \right] \label{eq:eq1-ccr-num},
\end{eqnarray}
while dropping the quadratic term containing $B_{K2}$ (i.e.,
setting $B_{K2}=0$) returns an identical numerical result on the
right hand side of (\ref{eq:eq1-ccr-num}). This indicates that the
contribution of $B_{K2}$ to (\ref{eq:eq1-ccr-num}) is less that
the numerical precision of Matlab and we may in this case simply
set $B_{K2}=0$ to obtain a simpler controller.

\subsection{Classical LQG controller designs}
After successfully obtaining a fully quantum controller, a natural
question that now arises is: Does this controller offer any
improvement over a classical controller that is driven by
continuous measurements (e.g., by homodyne detection; see
\cite{BR04}) of a quadrature of the plant output $y$? To answer
this question, suppose now that we perform continuous measurements
of one quadrature (in this case, the first element) of $y$. Thus,
we replace the output $y$ in (\ref{eq:eg-plant-1}) by another
output $y'$ (a classical signal) given by:
$$y'=[\begin{array}{cc} 2\sqrt{k_2} & 0 \end{array}]xdt+[\begin{array}{cc} 1 & 0 \end{array}]dw_1.$$
We seek a classical controller of the form:
\begin{eqnarray}
d\xi &=& A_K\xi dt + B_K dy'; \nonumber \\
\beta_u &=& C_K \xi \label{eq:c-cntrl}
\end{eqnarray}
whose output  modulates  a light beam to produce the control
signal $u$:
\begin{eqnarray}
du&=&\beta_u dt+ dw_{K1} \nonumber \\
&=&C_K \xi dt + dw_{K1}. \label{eq:c-cntrl-out}
\end{eqnarray}
The optimal controller matrices $A_K,B_K,C_K$ can now be found by
applying the standard LQG machinery (with $C_z$ and $D_z$ as
before) to the following modified plant (to account for the
presence of the noise $w_{K1}$ in the controller output $u$, see a
related discussion in Section \ref{sec:lmi-rank}) with $\beta_u$
being viewed as the ``control signal'':
\begin{eqnarray*}
dx &=& \left[\begin{array}{cc} 0 & \Delta \\-\Delta & 0
\end{array}\right] x dt + \left[ \begin{array}{cc} 0 & 0 \\ 0 & -2\sqrt{k_1}
\end{array}\right]\beta_u dt
\\&&\quad   +\left[\begin{array}{cccccc} 0 & 0 & 0 & 0 & 0 & 0
\\ 0 & -2\sqrt{k_1} & 0 & -2\sqrt{k_2} & 0 & -2\sqrt{k_3} \end{array}\right]
\left[\begin{array}{c} dw_{K1} \\ dw_1 \\ dw_2
\end{array}\right],
\nonumber\\
dy'&=&[\begin{array}{cc} 2\sqrt{k_2} & 0
\end{array}]x dt+[\begin{array}{cc} 1 & 0 \end{array}]dw_1.
\end{eqnarray*}
The optimal classical controller was then found to be:
\begin{eqnarray*}
d\xi &=&\left[\begin{array}{cc} -0.0658 & 0.1 \\ -0.1217 & -0.2 \end{array}\right]\xi dt + \left[\begin{array}{c} 0.3291 \\ 0.1083 \end{array}\right]dy'; \\
du &=& \left[\begin{array}{cc} -1 & 0 \\ 0 & 1
\end{array}\right]\xi dt + dw_{K1},
\end{eqnarray*}
while the optimal LQG cost is $J_{\infty}=4.8468$.

It turns out that cost obtained by this classical controller is
lower than the cost we had previously obtained with the quantum
LQG controller ($J_{\infty}=4.8468$ in the former compared to
$J_{\infty}=5.7382$ in the latter). Moreover, this may not the
best performance achievable by a classical linear controller as it
is also possible to perform a so-called {\em indirect measurement}
of $y$ by first mixing it with an additional vacuum noise via a
beam splitter \cite{BR04,GZ00}, followed by homodyne detection of
the real quadrature of one output of the beamsplitter and of the
imaginary quadrature of the other output, thus giving the
controller \emph{noisy} information about \emph{both} quadratures
of $y$. For examples of indirect measurement, see \cite[Sections
VII-C and VII-D]{JNP06}.

To describe an indirect measurement process, suppose that the
vacuum noise going into the beamsplitter is
$w_0=(w_{0,1},w_{0,2})$ (in quadrature notation) and the
beamsplitter divides the signals between the two ports according
to the ratio $\alpha:\beta$ where $\alpha^2+\beta^2=1$, $0\leq
\alpha,\beta \leq 1$. Then the classical output signal $y''$ of
the indirect measurement is given by:
$$dy''=\left[\begin{array}{cc} \alpha & 0 \\ 0 & -\beta
\end{array}\right]dy+ \left[\begin{array}{cc} \beta & 0 \\ 0 &
\alpha
\end{array}\right]dw_0.$$

Let us now  seek a classical controller of the form
(\ref{eq:c-cntrl}) and (\ref{eq:c-cntrl-out}) with $y'$ replaced
by $y''$ for various values of $\alpha$ between $0$ and $1$. Note
that measurement of $y'$ can be viewed as a special case of
indirect measurement with $\alpha=1$. A plot of the LQG cost
$J_{\infty}$ achieved by this controller versus the value of
$\alpha$ is shown in Figure~\ref{fg:indirect}.

\begin{figure}[h!]
\centering
\includegraphics[scale=0.4]{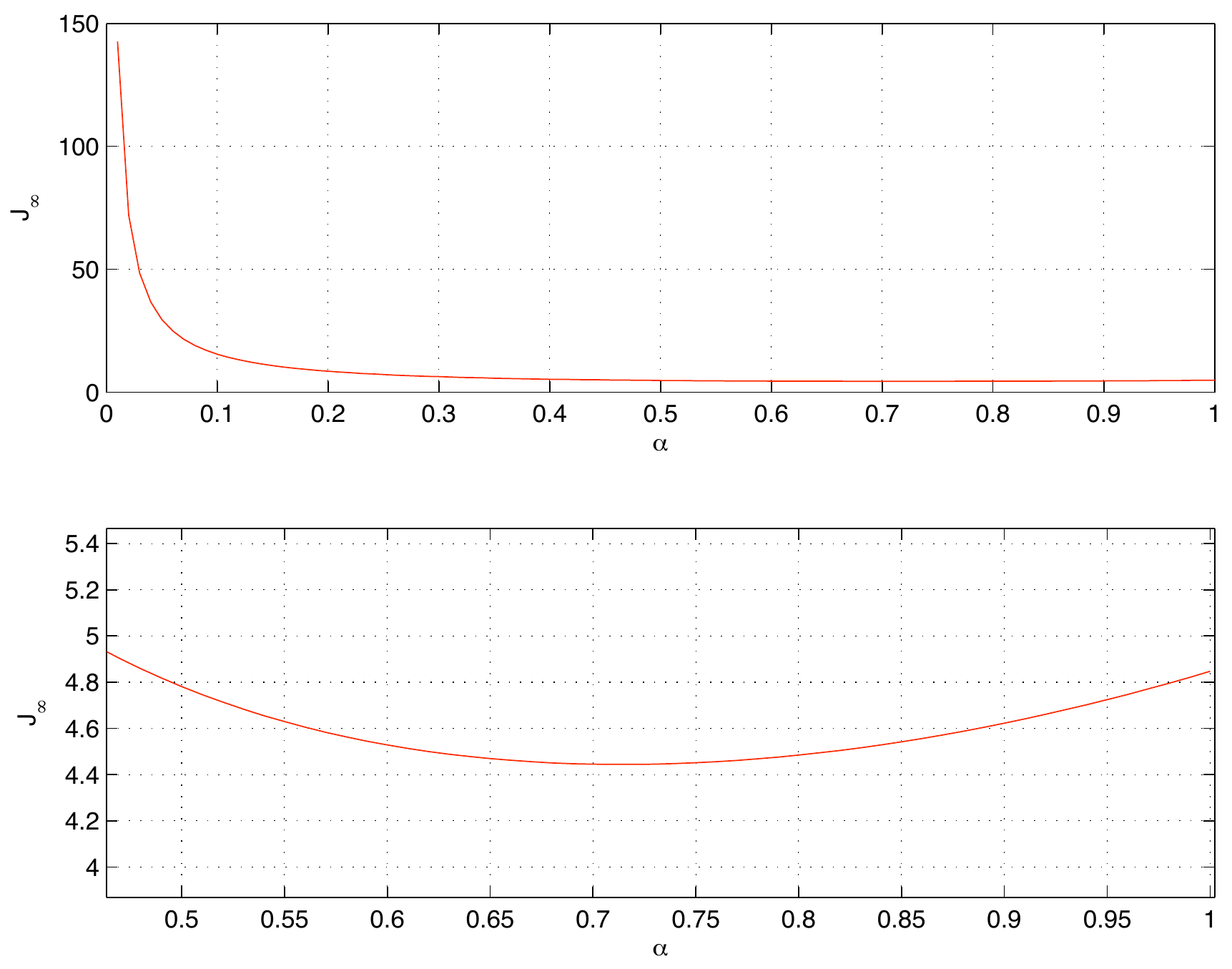}
\caption{Plot of $J_{\infty}$ vs. $\alpha$}%
\label{fg:indirect}%
\end{figure}

It can be seen that $J_{\infty}$ achieves its minimum around
$\alpha=0.715$ with a value of $J_{\infty} \approx 4.4440$. This
raises the question of whether there exists at all a fully quantum
controller that gives a cost lower the best achievable with
indirect measurements. We shall see in the next section that the
answer to this is affirmative by obtaining such a fully quantum
controller.

\subsection{Quantum LQG controller design example II}
We now try to find a fully quantum controller that achieves a cost
$J_{\infty}<4.444$. However, several attempts at numerically
solving Problem \ref{pb:lqg-form} for $\Theta_K=J$ by alternating
projections were unsuccessful with the algorithm failing to
converge. Thus, we turn our attention to numerically solving
Problem \ref{pb:lqg-form-2} according to the discussion in Section
\ref{sec:extension} with $\gamma=5$ and $\Theta_K^0=0_{2 \times
2}$. It turns out that the alternating projections algorithm
converges (with a running time of approximately 781.87 seconds)
and returns the following solution:
\begin{eqnarray*}
&&\Theta_K=\left[\begin{array}{cc} 0 & -0.1820 \\
0.1820 & 0 \end{array}\right] ;\quad \hat
A_K=\left[\begin{array}{cc} -0.2125 & 0.0666 \\ -0.3789 & 0.0257
\end{array}\right];\\
&& \quad \hat B_{K1}=\left[\begin{array}{cc} 0.0642 & -0.0547 \\
0.0480 & -0.2556
\end{array}\right]; \quad \hat B_{K2}=10^{-10}\left[\begin{array}{cc}-0.0468 &
0.0255 \\ -0.1160 & 0.0114
\end{array}\right];\\
&&\hat B_{K3}=\left[\begin{array}{cc} 0.3522 & -0.0215 \\ 0.4393 &
-0.0842
\end{array}\right]; \quad \hat C_K=\left[\begin{array}{cc}-1.4044 &
0.3008 \\ -0.2639 & 0.3526
\end{array}\right],
\end{eqnarray*}
and with the LQG cost $J_{\infty}=4.1793$.

\begin{remark}
It is important to note here that although the type of controller
is not a priori specified, the algorithm returns $\Theta_K$ which
corresponds (up to a similarity transformation) to a fully quantum
controller, rather than a classical controller.
\end{remark}

Since $\Theta_K=SJS\trp$ with
$S=\sqrt{0.182}\left[\begin{array}{cc} 0 & 1 \\ 1 &
0\end{array}\right]$, then according to Lemma \ref{lem:pb2-to-pb1}
we find the matrices $A_K$, $B_{Ki}$ ($i=1,2,3$) and $C_K$ solving
Problem \ref{pb:lqg-form} as follows:
\begin{eqnarray}
&&A_K=\left[\begin{array}{cc} 0.0257 & -0.3789
\\ 0.0666 & -0.2125\end{array}\right];
\quad B_{K1}=\left[\begin{array}{cc} 0.1126 & -0.5992 \\
0.1504 & -0.1283 \end{array}\right]; \nonumber \\
&&\quad
B_{K2}=10^{-10} \left[\begin{array}{cc} -0.2721 & 0.0272 \\
-0.1096 & 0.0601
\end{array}\right];
\quad B_{K3}=\left[\begin{array}{cc}1.0297 & -0.1974 \\
0.8255 & -0.0503
\end{array}\right];\nonumber\\
&&\quad C_K=\left[\begin{array}{cc} 0.1283 & -0.5992 \\ 0.1504 &
-0.1126 \end{array}\right]. \label{eq:eg-cntrl-2}
\end{eqnarray}

Note again that $B_{K2}$ has negligibly small entries and in a similar fashion to
 the related discussion in Section 8.1, we may set $B_{K2}=0$ to
obtain a simpler controller. For this controller we have that the
right hand side of (\ref{eq:eq1-ccr-num}) takes the numerical
value $10^{-13}\left[\begin{array}{cc} 0 & 0.6098 \\ -0.6098 &
0\end{array} \right]$. The fact that we were ultimately able to
find a fully quantum controller that outperforms the best
performance of a classical linear controller with indirect
measurement (with $J_{\infty}=4.1793$ in the former compared to
$J_{\infty} \approx 4.4440$ in the latter case) is very
interesting and indicates that the quantum LQG controller should
be considered to be more than a theoretical curiosity, but one
which may potentially be of practical significance.

It is important to note that although the controller
(\ref{eq:eg-cntrl-2}) beats the class of classical linear
controllers with indirect measurements, it may not beat all
possible classical controllers. For example, there may exist
classical non-linear controllers that could perform just as well
or better.

\begin{remark}
Here we do not discuss the physical realizations of fully quantum
controllers such as (\ref{eq:eg-cntrl-1}) and
(\ref{eq:eg-cntrl-2}) in the laboratory using currently available
quantum optical components. This is addressed in the work of \cite{NJD08} on a network synthesis theory for dynamical
quantum optical networks that parallels that of electrical network
synthesis theory.
\end{remark}

\section{Conclusions}
\label{sec:conclude} In this paper, we have formulated a quantum
LQG problem that allows the possibility for the controller to be
another quantum system. In general, this problem is a polynomial
matrix programming problem that can be systematically converted to
a rank constrained LMI problem. To solve the problem, we propose a
numerical procedure based on an alternating projections algorithm
and an extension of this procedure for the case where the type of
controller is not a priori specified. In two examples, we consider
the problem of stabilization of a marginally stable quantum plant
and successfully compute some fully quantum LQG controllers that
achieve this goal. We show that for the fully classical controller
schemes considered herein (in which continuous measurements are
performed on the output of the plant), a fully quantum LQG
controller can be found which gives an improved level of
performance.

It is important to emphasize that the quantum LQG problem posed
here cannot be solved by simply applying the standard LQG
methodology, since the resulting controller may not be physically
realizable (i.e., the controller system matrices are not
guaranteed to satisfy (\ref{eq:cntrl-pr-1}) and
(\ref{eq:cntrl-pr-2})). In fact, unlike the classical LQG problem
that can be reformulated as a convex LMI problem, the quantum LQG
problem is in general computationally hard.

For future investigation, theoretical aspects of the quantum LQG
problem posed herein deserve further study. At the moment, system
theoretic conditions for the existence of a solution to this
problem are not known. Another avenue which is of importance for
future investigation is the practical implementation of quantum
LQG controllers and, indeed, general linear quantum controllers in
the laboratory.

\begin{acknowledgement}                               
The first author thanks R. Orsi for discussions on rank
constrained LMI problems and the LMIRank software, V. P. Belavkin
for discussions on quantum linear systems and dynamic programming,
and J. L\"{o}fberg for correspondences on the Yalmip software.
\end{acknowledgement}

\bibliographystyle{plain}
\bibliography{ieeeabrv,tacbib,rip,mjbib2004,irpnew}

\end{document}